\documentclass{iopjournal}
\usepackage{amssymb,graphicx,amsmath,graphics,subfigure,xfrac,url,xcolor}

\newtheorem{theorem}{Theorem}

\newtheorem{proposition}{Proposition}

\newtheorem{proof}{Proof}

\newtheorem{lemma}{Lemma}

\begin{document}

\articletype{Paper}

\title{Proof that the Klein-Gordon type equation with alpha attractor potential has no   Liouvillian solution or as a composition of special functions}

\author{Benjamin De Zayas and Clara Rojas}

\affil{Yachay Tech University, School of Physical Sciences and Nanotechnology, Hda. San Jos\'e S/N y Proyecto Yachay, 100119, Urcuqu\'i, Ecuador.}

\email{bdezayas@yachaytech.edu.ec}

\keywords{Picard-Vessiot Theory, Liouvillian Solutions, Differential Galois Theory, $\alpha$-attractor Potential, Special Functions, Mathematical Physics.}

\begin{abstract}
This study investigates the analytical solvability of the Klein-Gordon and Duffin-Kemmer-Petiau (DKP) equations for a scalar particle interacting with a transcendental $\alpha$-attractor-type potential, $V(x) = V_0 e^{a \tanh(bx)}$. We first address the problem of integrability within the framework of Picard-Vessiot theory. By analyzing the differential field extensions associated with the system, we demonstrated that the differential Galois group is the full special linear group $SL(2, \mathbb{C})$. Given that this group is not solvable, we provide rigorous proof for the non-existence of Liouvillian solutions, effectively ruling out any expression in terms of primitives and elementary functions. Building upon this result, we further establish that wavefunctions cannot be represented as finite compositions or transformations of classical special functions, such as those of the Bessel, Whittaker, or Heun families. This second conclusion is supported by the ``double-transcendence'' of the potential; we prove via the Hermite-Lindemann theorem that no rational coordinate transformation $z(x)$ exists that could map the physical equation into an ordinary differential equations(ODE) with rational coefficients. Consequently, the $\alpha$-attractor potential is  strictly non-integrable and lies entirely outside the landscape of solvable relativistic quantum systems.

\end{abstract}


\section{Introduction}	

Relativistic quantum mechanics constitutes the natural framework for describing scalar particles under the combined principles of quantum theory and special relativity. In this setting, the Klein--Gordon (KG) equation plays a central role in modeling spin-0 particles interacting with external fields\cite{rojas2015,villalba,valladares2023superradiance,zambrano:2025,potosi:2026}. Its relevance extends beyond formal considerations, as it provides a basis for understanding nontrivial physical phenomena such as scattering processes, tunneling through potential barriers, and vacuum instability effects, including particle-antiparticle pair creation \cite{Greiner1990,Thaller1992}. In particular, the existence of exact solutions is of fundamental importance, as they allow one to analytically characterize  the spectral properties of the system and study asymptotic behaviors that are often inaccessible through purely numerical approaches.

A classical line of research has focused on identifying scalar potentials for which the KG equation provides closed-form solutions. Among the most studied cases are the step potential and smooth profiles, such as the hyperbolic tangent potential, which arise naturally in relativistic scattering problems and domain wall models \cite{Flugge1971,Dombey1999}. These systems are typically integrable, in the sense that the corresponding differential equations can be reduced through suitable transformations to canonical forms belonging to the hypergeometric hierarchy. Consequently, their solutions can be expressed in terms of special functions such as Gauss hypergeometric or Whittaker functions. This property is deeply connected to the underlying symmetry of the potential and, more importantly, to the possibility of rewriting the equation over a differential field of rational functions, which enables the application of standard analytic techniques \cite{Ronveaux1995,Slavyanov2000}.

However, not all physically relevant potentials fall within this integrate paradigm. The study of integrability in physically relevant systems has been extensively developed using differential Galois theory. In particular, the Morales–Ramis framework has been successfully applied to Hamiltonian cosmological models \cite{Maciejewski_2008}, where generic non-integrability has been rigorously established.

In this work, we consider the scalar potential given by
\[
V(x) = V_0 e^{a \tanh(bx)},
\]
which exhibits a genuine transcendental structure owing to the composition of exponential and hyperbolic functions. Unlike classical cases, this potential cannot be reduced to any known family of exactly solvable models. In particular, it does not allow a transformation that maps the associated differential equation into a form defined over the field of rational functions $\mathbb{C}(z)$. This obstruction, which we refer to as intrinsic transcendence, prevents the application of standard reduction schemes which typically lead to hypergeometric-type equations. Consequently, the system is outside the scope of the traditional classification of solvable potentials and requires a different analytical approach.

The aim of this study is to rigorously investigate the integrability properties of the Klein--Gordon equation under the proposed potential using the framework of differential Galois theory. More specifically, we employ the Picard-Vessiot theory to analyze the structure of the differential field extension generated by the solutions and to determine the associated differential Galois group. Our main result establishes that the Galois group is isomorphic to $SL(2,\mathbb{C})$, which is non-solvable, thereby implying the non-existence of Liouvillian solutions. 

Beyond this, we show that the equation cannot be reduced, through any admissible change of variables, to a differential equation defined over a rational function field $\mathbb{C}(z)$, nor to any of the canonical forms associated with the hypergeometric hierarchy. In particular, this excludes the possibility of expressing the solutions as compositions of classical special functions, such as hypergeometric, Whittaker, Bessel, or Airy functions, which arise precisely from equations that admit such reductions. This establishes a stronger obstruction to integrability, revealing that the system lies outside the standard classes of the analytically solvable models. Recent research has expanded the catalog of exact solutions by identifying connections between Heun equations and multiparameter classes of Abel differential equations \cite{ESCheb-Terrab_2004}; however, these methods are not applicable to potentials with intrinsic transcendence such as that analyzed here.

The paper is organized as follows: we first derive the Klein--Gordon equation for the given potential and study its structural properties; we then construct the corresponding differential field and apply the Picard--Vessiot theory to determine the Galois group, and finally, we discuss the implications of these results in the broader context of integrability and relativistic quantum systems with transcendental interactions \cite{Kolchin1973,Singer1993,Slavyanov2000,Ronveaux1995}.
\section{Methods}

\subsection{Non-Liuvillian solution, coordinate Transformation and Normal Form}
\label{Non-Liuvillian solution}

We consider a relativistic particle of mass $m$ and energy $E$ coupled to a scalar-vector potential $V(x)$. In one dimension and adopting natural units ($\hbar = c = 1$), the stationary Klein-Gordon equation is expressed as:

\begin{equation}
\left\{ \dfrac{d^2}{dx^2} + \left[E - V(x)\right]^2 - m^2 \right\} \Psi(x) = 0.
\end{equation}

By introducing the specific $\alpha$-attractor profile $V(x) = V_0 \exp(a \tanh(bx))$, we expand the quadratic term to obtain the explicit differential form:

\begin{equation}
\dfrac{d^2 \Psi}{dx^2} + \left[ E^2 - m^2 - 2EV_0 e^{a \tanh(bx)} + V_0^2 e^{2a \tanh(bx)} \right] \Psi(x) = 0.
\end{equation}

\subsubsection{Coordinate Transformation and Normal Form}

In order to evaluate the Galoisian integrability of the system, it is convenient to map the spatial domain onto the compact interval $z \in (-1, 1)$ via the diffeomorphism $z = \tanh(bx)$. Under this transformation, the differential operator scale is as follows.

\begin{equation}
    \dfrac{d}{dx} = b(1-z^2) \dfrac{d}{dz}, \quad \dfrac{d^2}{dx^2} = b^2(1-z^2)^2 \dfrac{d^2}{dz^2} - 2b^2 z(1-z^2) \dfrac{d}{dz}.
\end{equation}

Substituting these into the wave equation and normalizing with respect to the second-order derivative, we arrive at:

\begin{equation}
\Psi''(z) - \dfrac{2z}{1-z^2} \Psi'(z) + \left[ \dfrac{E^2 - m^2 - 2EV_0 e^{az} + V_0^2 e^{2az}}{b^2 (1-z^2)^2} \right] \Psi(z) = 0.
\label{eq:KG_z_full}
\end{equation}

To simplify the analysis within the Picard-Vessiot framework, we eliminate the first-derivative term by employing the dependent variable transformation $\Psi(z) = (1-z^2)^{-1/2} y(z)$. This procedure yields the \textbf{normal form} of the equation:

\begin{equation}
    y''(z) = R_{\text{eff}}(z, E) y(z),\label{eq:NormalForm_Final}
\end{equation}
where $R_{\text{eff}}(z, E)$ incorporates the effective transcendental potential in the $z$-coordinate, setting the stage for studying the differential Galois group.

\subsubsection{Derivation of the Normal Form and the Effective Coefficient $R_{\text{eff}}$}

Starting from the differential equation in the coordinate $z = \tanh(bx)$, we introduce the transcendental variable $t = e^{az}$ to express the system as:

\begin{equation}
    \Psi''(z) + P(z) \Psi'(z) + Q(z, t) \Psi(z) = 0,
\end{equation}
where the coefficients are defined by:
\begin{equation}
    P(z) = -\dfrac{2z}{1-z^2}, \quad Q(z, t) = \dfrac{V_0^2 t^2 - 2EV_0 t + E^2 - m^2}{b^2 (1-z^2)^2}.
\end{equation}

\subsubsection{Transformation of the Dependent Variable}

To eliminate the first-derivative term and simplify the Galoisian analysis, we appleid the transformation $\Psi(z) = f(z) y(z)$. Substituting the derivatives of $\Psi$ into the original equation and normalizing, we obtain:

\begin{equation}
y'' + \left( \frac{2 f'}{f} + P \right) y' + \left( \frac{f'' + P f'}{f} + Q \right) y = 0.
\end{equation}

The condition for the normal form requires coefficient of $y'$ to vanish, which leads to the following constraint:

\begin{equation}
\dfrac{2 f'}{f} - \dfrac{2z}{1-z^2} = 0 \implies \dfrac{f'}{f} = \dfrac{z}{1-z^2}.
\end{equation}

Integrating this expression yields the scale factor:

\begin{equation}
f(z) = \exp \left( \int \dfrac{z}{1-z^2} dz \right) = (1-z^2)^{-1/2}.
\end{equation}

\subsubsection{Calculation of the Geometric Curvature}
The residual term of the transformation, representing the ``geometric curvature'' $\frac{f'' + P f'}{f}$, is computed as follows:

\begin{itemize}
    \item From our constraint, $\dfrac{f'}{f} = \dfrac{z}{1-z^2}$.
    \item The second-order term is $\dfrac{f''}{f} = \dfrac{d}{dz} \left( \frac{f'}{f} \right) + \left( \dfrac{f'}{f} \right)^2 = \dfrac{1+z^2}{(1-z^2)^2} + \dfrac{z^2}{(1-z^2)^2} = \dfrac{1+2z^2}{(1-z^2)^2}$.
    \item The coupling with the original coefficient is $P \dfrac{f'}{f} = \left( -\dfrac{2z}{1-z^2} \right) \left( \dfrac{z}{1-z^2} \right) = -\dfrac{2z^2}{(1-z^2)^2}$.
\end{itemize}

Combining these results, we find:

\begin{equation}
    \dfrac{f'' + P f'}{f} = \dfrac{1 + 2z^2 - 2z^2}{(1-z^2)^2} = \dfrac{1}{(1-z^2)^2}.
\end{equation}

\subsubsection*{3. Defining the Effective Potential $R_{\text{eff}}(z, t)$}

By substituting the geometric term and $Q(z, t)$ into the equation for $y''$, we arrive at:

\begin{equation}
    y''(z) = - \left[ \dfrac{1}{(1-z^2)^2} + \dfrac{V_0^2 t^2 - 2EV_0 t + E^2 - m^2}{b^2 (1-z^2)^2} \right] y(z).
\end{equation}

Grouping the terms under the common denominator $b^2(1-z^2)^2$, we define the final effective coefficient:

\begin{equation}
    R_{\text{eff}}(z, t) = \dfrac{-V_0^2 t^2 + 2EV_0 t - (E^2 - m^2 + b^2)}{b^2 (1-z^2)^2}.
\end{equation}

This quadratic expression in $t$ constitutes the core of the differential Galois analysis. The term $-(E^2 - m^2 + b^2)$ effectively encapsulates both the relativistic energy scales and curvature induced by diffeomorphism.

\subsubsection{Definition of the Differential Field $K$}

We characterize the effective coefficient $R_{\text{eff}}$ by introducing the transcendental variable $t = e^{az}$. Within the Picard-Vessiot framework, $R_{\text{eff}}$ is treated as an element of the polynomial ring over the field of rational functions $\mathbb{C}(z)[t]$:

\begin{equation}
    R_{\text{eff}}(z, t) = \frac{V_0^2 t^2 - 2EV_0 t + (E^2 - m^2 + b^2)}{b^2 (1-z^2)^2} - \frac{1}{(1-z^2)^2}.
\end{equation}

To investigate the existence of Liouvillian solutions, we analyzed the associated \textbf{Riccati equation} for the function $W = y'/y$:

\begin{equation}
    D(W) + W^2 = R_{\text{eff}}(z, t).\label{eq:Riccati_Corrected}
\end{equation}

The base differential field is defined as $K = \mathbb{C}(z)(t)$, where the total derivation operator $D$ is coupled with an exponential relationship as follow:

\begin{equation}
    D = \partial_z + a t \partial_t.
\end{equation}

Because the constant field is $\mathcal{C}_K = \mathbb{C}$, the Differential Galois Group $G$ is an algebraic subgroup of $SL(2, \mathbb{C})$. The presence of the linear term in $t$ (stemming from the $-2EV$ coupling) ensures that $R_{\text{eff}}$ remains an irreducible second-degree polynomial for most physical configurations, providing a fundamental algebraic obstruction to the existence of Liouvillian solutions.

This conclusion is rigorously established through the following proof:

The existence of Liouvillian solutions for the linear equation $y'' = R_{\text{eff}}y$ requires that the associated Riccati equation, $D(W) + W^2 = R_{\text{eff}}$, admits a solution $W$ in an algebraic extension of $K$. Following Case 1 of the Kovacic algorithm, we search for a solution $W \in K$. Given that $R_{\text{eff}}$ is a polynomial of degree 2 in $t$, $W$ must take the form:\begin{equation}W(z, t) = w_1(z)t + w_0(z)\end{equation}.

Substituting this form into the Riccati equation $D(W) + W^2 - R_{\text{eff}} = 0$ yields:\begin{equation}(D w_1)t + w_1(at) + D w_0 + (w_1 t + w_0)^2 - (r_2 t^2 + r_1 t + r_0) = 0\end{equation}.

Collecting terms by powers of $t$, we obtain a system of coupled differential equations:\begin{itemize}\item \textbf{Term $t^2$:} $w_1^2 = r_2(z) \implies w_1(z) = \pm \frac{i V_0}{b(1-z^2)}$,

\item \textbf{Term $t^1$:} $w_1' + a w_1 + 2w_1 w_0 = r_1(z)$,

\item \textbf{Term $t^0$:} $w_0' + w_0^2 = r_0(z)$.
\end{itemize}

3. The Algebraic ObstructionFrom the $t^1$ equation, we isolate $w_0(z)$:

\begin{equation}
w_0 = \frac{r_1 - w_1' - a w_1}{2w_1}.
\end{equation}

Substituting $w_1 = \frac{i V_0}{b(1-z^2)}$ and its derivative $w_1' = \frac{2i V_0 z}{b(1-z^2)^2}$, we find:

\begin{equation}w_0(z) = \frac{-iE - bz}{b(1-z^2)} - \frac{a}{2}.\end{equation}

For a Liouvillian solution to exist, this expression for $w_0$ must satisfy the $t^0$ equation ($w_0' + w_0^2 = r_0$). Computing the left-hand side:

\begin{equation}
w_0' + w_0^2 = \frac{-(E^2 + b^2)}{b^2(1-z^2)^2} + \frac{aiE + abz}{b(1-z^2)} + \frac{a^2}{4}.
\end{equation}

Equating this to $r_0(z) = \frac{-(E^2 + b^2) + m^2}{b^2(1-z^2)^2}$, we obtain the necessary condition for solubility:

\begin{equation}\frac{m^2}{b^2(1-z^2)^2} = \frac{a(iE + bz)}{b(1-z^2)} + \frac{a^2}{4}.
\end{equation}

4. The above expression is a rational identity in $z$. For it to hold, the coefficients of the Laurent expansion around the poles $z = \pm 1$ must match. However, the left-hand side exhibits a pole of order two at $z=1$ (proportional to $m^2$), whereas the right-hand side contains only a pole of order one and a constant term $a^2/4$. No set of physical parameters $(a, E, m, V_0)$ exists—excluding the trivial case $a=0$—that satisfies this identity for all $z$. Consequently, the Riccati equation has no solution in $K$. Extending this result to higher cases of the Kovacic algorithm, it follows that the Differential Galois Group is $G = SL(2, \mathbb{C})$. Because $SL(2, \mathbb{C})$ is not a solvable group, the equation $y'' = R_{\text{eff}}y$ possesses no Liouvillian solutions. Q.E.D.

A more rigorous proof follows.
\subsubsection{Integrability Analysis}

\subsubsection{Subgroup Classification and Picard-Vessiot Theory}

The search for exact solutions to the normal differential equation $y'' = R_{\text{eff}}(z, t)y$ is grounded in the Picard-Vessiot theory. The fundamental result of this framework is as follows:

\begin{theorem}
A linear homogeneous differential equation admits Liouvillian solutions (expressible through elementary functions, integrals, and algebraic extensions) if and only if the identity component $G^0$ of its Differential Galois Group $G$ is a solvable algebraic group \cite{VanDerPut2003}.
\end{theorem}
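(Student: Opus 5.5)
The plan is to prove both implications by working inside the Picard–Vessiot extension $L \supset K$ of $y''=R_{\text{eff}}y$, with $G=\mathrm{Gal}(L/K)\subseteq SL(2,\mathbb{C})$ acting on the two-dimensional solution space $V$. The guiding tool is the Galois correspondence between intermediate differential fields $K\subseteq F\subseteq L$ and closed subgroups $H\subseteq G$, together with two facts: $G^0$ is the fixed group of the algebraic closure of $K$ inside $L$, and $[G:G^0]<\infty$. The statement is classical (Kolchin), so the argument will track the standard route given in \cite{VanDerPut2003}. It uses only that $\mathcal{C}_K=\mathbb{C}$ is algebraically closed, which holds for $K=\mathbb{C}(z)(t)$.

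\emph{($\Leftarrow$) $G^0$ solvable implies $L$ is Liouvillian.} Let $K_0=L^{G^0}$. This is a finite Galois extension of $K$, hence algebraic, and $\mathrm{Gal}(L/K_0)=G^0$ is connected and solvable. By the Lie–Kolchin theorem, $G^0$ is triangularizable on $V$, so there is a common eigenvector $y_1$ with $\sigma(y_1)=\chi(\sigma)y_1$. Then $u=y_1'/y_1$ is fixed by $G^0$, so $u\in K_0$, and therefore $y_1=\exp\!\int u$. Reduction of order gives a second solution
\[
y_2=y_1\int y_1^{-2},
\]
so $L$ is obtained from $K$ by an algebraic step, one exponential of an integral, and one integral. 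Hence $L$ is Liouvillian.

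\emph{($\Rightarrow$) A Liouvillian solution implies $G^0$ solvable.} Suppose $L$ embeds in a Liouvillian tower $K=F_0\subset F_1\subset\cdots\subset F_n$, where each step adjoins an algebraic element, an integral, or the exponential of an integral. The plan is to show by induction that the Galois group of the Picard–Vessiot closure of $F_n$ has solvable identity component. Adjoining an integral gives a group $\mathbb{G}_a$, and adjoining an exponential gives a subgroup of $\mathbb{G}_m$; both are abelian. An algebraic step changes only the finite component group. Extensions of solvable groups are solvable, and the identity component of an extension by a finite group is unaffected at the connected level. Since $G$ is a quotient of this solvable-by-finite group, $G^0$ is a quotient of a solvable connected group, hence solvable.

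\emph{Main obstacle.} The delicate point is the last step. The tower $F_n$ need not be a Picard–Vessiot extension, so one must first embed it in a normal (Picard–Vessiot) Liouvillian extension and check that the restriction map to $G$ is surjective. I would handle this by taking the Picard–Vessiot extension of a suitable product operator built from the defining equations of the tower, which is again Liouvillian. Surjectivity of the restriction map then follows from the Galois correspondence for normal subextensions.

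For second-order equations one can shortcut the whole argument with the classification of algebraic subgroups of $SL(2,\mathbb{C})$ that underlies the Kovacic cases already used above. The only subgroup with nonsolvable identity component is $SL(2,\mathbb{C})$ itself. In every other case (reducible, imprimitive, or finite) an explicit Liouvillian solution can be written down, as in the ($\Leftarrow$) construction.
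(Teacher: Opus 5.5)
The paper gives no argument of its own here (it cites van der Put--Singer), so your sketch has to stand on its own. Your ($\Leftarrow$) direction is the standard argument and is correct for order two. $K_0=L^{G^0}$ is algebraic over $K$, Lie--Kolchin gives a $G^0$-eigenvector $y_1$ with $y_1'/y_1\in K_0$, and the constant Wronskian gives $y_2=y_1\int y_1^{-2}$.

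The gap is in ($\Rightarrow$), which is exactly the direction the paper relies on: it passes from $G=SL(2,\mathbb{C})$ to the absence of Liouvillian solutions. You propose to place the tower $F_n$ inside a single Picard--Vessiot extension of $K$ via a ``product operator built from the defining equations of the tower''. But the equation defining $F_i/F_{i-1}$ has coefficients in $F_{i-1}$, not in $K$, and in general no Picard--Vessiot extension of $K$ contains the tower.

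Here is a counterexample. Over $K=\mathbb{C}(x)$, the function $f=\exp(e^x)$ lies in the Liouvillian tower $K\subset K(e^x)\subset K(e^x,f)$. Suppose $f$ lay in a Picard--Vessiot extension $L$ of $K$.
\begin{itemize}
\item Restriction $\mathrm{Gal}(L/K)\to\mathrm{Gal}(K(e^x)/K)\cong\mathbb{C}^*$ is onto, so for every $c\in\mathbb{C}^*$ some $\sigma$ sends $e^x\mapsto ce^x$.
\item Then $g_c=\sigma(f)\in L$ satisfies $g_c'/g_c=ce^x$.
\item No $h\in K(e^x)$ has $h'/h=\gamma e^x$ with $\gamma\neq0$. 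Hence Kolchin--Ostrowski makes $g_{c_1},\dots,g_{c_n}$ algebraically independent over $K(e^x)$ whenever $c_1,\dots,c_n$ are $\mathbb{Q}$-linearly independent.
\item This contradicts $\operatorname{trdeg}(L/K)=\dim G<\infty$.
\end{itemize}
So the solvable-by-finite group surjecting onto $G$ that your bookkeeping presupposes need not exist.

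The missing idea is to avoid any global normal closure and induct on the length of the tower, replacing $(K,L)$ by $(F_1,LF_1)$.
\begin{itemize}
\item The compositum $LF_1\subseteq F_n$ is a Picard--Vessiot extension of $F_1$ for the same equation.
\item Restriction identifies $\mathrm{Gal}(LF_1/F_1)$ with the closed subgroup $H=\mathrm{Gal}(L/L\cap F_1)$ of $G$, so $H^0$ is solvable by induction.
\item If $F_1/K$ is algebraic, then $L\cap F_1\subseteq L^{G^0}$, so $H\supseteq G^0$ and $H^0=G^0$.
\item If $F_1=K(s)$ with $s'\in K$ or $s'/s\in K$, then $F_1/K$ is itself Picard--Vessiot with abelian group. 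So $L\cap F_1$ is a normal subextension, and $H$ is normal in $G$ with $G/H$ abelian. The connected group $(G^0)'$ then lies in $H$, hence in $H^0$, so $G^0$ is solvable.
\end{itemize}
It is this connectedness argument, not ``extensions of solvable groups are solvable'', that disposes of the finite pieces.

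Two smaller points.
\begin{itemize}
\item Your closing shortcut through the subgroup classification of $SL(2,\mathbb{C})$ only gives ($\Leftarrow$). It does not explain why $G=SL(2,\mathbb{C})$ rules out Liouvillian solutions.
\item You work only in order two, so you should say explicitly that one Liouvillian solution $y_1$ yields the Liouvillian basis $\{y_1,\,y_1\int y_1^{-2}\}$. In higher order this fails: the direct sum of $y'=0$ and the Airy equation has the Liouvillian solution $1$ but group $SL(2,\mathbb{C})$. So the statement as written must be read as ``a full set of Liouvillian solutions''.
\end{itemize}
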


\subsubsection{Invariance of the Wronskian and Restriction to $SL(2, \mathbb{C})$}

\begin{proposition}
For a second-order linear differential equation in normal form, the Differential Galois Group $G$ is a subgroup of the special linear group $SL(2, \mathbb{C})$.
\end{proposition}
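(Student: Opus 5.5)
The plan is the standard Wronskian argument, carried out over the differential field $K=\mathbb{C}(z)(t)$ with derivation $D=\partial_z+at\partial_t$ introduced above. Let $L\supset K$ be the Picard--Vessiot extension of $y''=R_{\text{eff}}y$, and fix a fundamental system of solutions $y_1,y_2\in L$. Each $\sigma$ in the differential Galois group $G=\mathrm{Gal}_D(L/K)$ maps solutions to solutions. Hence there is a matrix $M_\sigma\in GL(2,\mathbb{C})$ with $(\sigma y_1,\sigma y_2)=(y_1,y_2)M_\sigma$. The entries of $M_\sigma$ lie in $\mathbb{C}$ because the field of constants of $L$ coincides with $\mathcal{C}_K=\mathbb{C}$, which is part of the Picard--Vessiot definition. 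This gives the faithful representation $G\hookrightarrow GL(2,\mathbb{C})$, and it remains to show $\det M_\sigma=1$.

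\textbf{Step 1.} Define the Wronskian $W=y_1Dy_2-y_2Dy_1\in L$. Differentiate it and use the equation $D^2y_i=R_{\text{eff}}y_i$:
\[
DW=y_1D^2y_2-y_2D^2y_1=R_{\text{eff}}(y_1y_2-y_2y_1)=0 .
\]
This is exactly where the normal form matters: a first-derivative term $P\,y'$ would instead give $DW=-PW$. So $W$ is a constant of $L$, and therefore $W\in\mathbb{C}\subset K$. It is nonzero because $y_1,y_2$ are linearly independent over constants.

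\textbf{Step 2.} Apply $\sigma$. It commutes with $D$, so $\sigma(W)$ is the Wronskian of $(y_1,y_2)M_\sigma$. By multilinearity of the determinant this equals $\det(M_\sigma)\,W$. On the other hand, $W\in K$ is fixed by $\sigma$. Since $W\neq0$, this forces $\det M_\sigma=1$, so $G\subseteq SL(2,\mathbb{C})$. Being defined by polynomial conditions, it is an algebraic subgroup.

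\textbf{Main point of care.} The only delicate step is justifying that $\mathcal{C}_K=\mathbb{C}$ for the exponential extension. If it were not, the Picard--Vessiot extension need not exist with no new constants, and $W$ might not lie in $K$. I would verify this directly. Take $g=p(t)/q(t)\in\mathbb{C}(z)(t)$ in lowest terms with $Dg=0$. Since $t=e^{az}$ is transcendental over $\mathbb{C}(z)$ for $a\neq0$, compare leading coefficients in $t$. Writing $g=\sum c_k(z)t^k$ (or a quotient of such sums), the condition $Dg=0$ forces $c_k'+akc_k=0$ for each $k$. The only rational solution is $c_k=0$ for $k\neq0$, and $c_0$ must be constant. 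Hence $g\in\mathbb{C}$. With this in hand, Steps 1--2 finish the proof.
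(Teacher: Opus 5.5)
Your proposal is correct and is essentially the paper's proof: the normal form makes the Wronskian a nonzero constant, each $\sigma$ multiplies it by $\det M_\sigma$, and since $\sigma$ fixes constants this forces $\det M_\sigma=1$. Your added check that $\mathcal{C}_K=\mathbb{C}$, which the paper only asserts, is a welcome supplement, but your sketch only covers polynomials in $t$. For a quotient $p/q$ in lowest terms, note that $D(p/q)=0$ makes the numerator and denominator each satisfy $D(\cdot)=\lambda(\cdot)$ with $\lambda\in\mathbb{C}(z)$, so both are monomials in $t$, and your coefficient argument then finishes the case.
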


\begin{proof}
Consider the differential equation in normal form:
\begin{equation}
    y'' - R(z)y = 0. 
\end{equation}
Let $\{y_1, y_2\}$ be linearly independent solutions. The Wronskian of the system is defined as:
\begin{equation}
    W(y_1, y_2) = \det \begin{pmatrix} y_1 & y_2 \\ y_1' & y_2' \end{pmatrix} = y_1 y_2' - y_2 y_1'.
\end{equation}

Differentiating the Wronskian with respect to $z$:
\begin{equation}
    W' = (y_1' y_2' + y_1 y_2'') - (y_2' y_1' + y_2 y_1'') = y_1 y_2'' - y_2 y_1''.
\end{equation}

Substituting the second derivatives from the original equation $y_i'' = R(z)y_i$:

\begin{equation}
    W' = y_1 (R(z)y_2) - y_2 (R(z)y_1) = 0.
\end{equation}

This implies that $W(y_1, y_2) = c$, where $c \in \mathbb{C} \setminus \{0\}$ is a constant. Without loss of generality, we set $W = 1$ using scaling.

Let $\sigma \in G$ be the differential Galois automorphism. By definition, $\sigma$ commutes with the derivation and leaves the constant field $\mathbb{C}$ fixed. The action of $\sigma$ on the solution basis is given by  matrix $M_\sigma$ such that:

\begin{equation}
    \sigma \begin{pmatrix} y_1 \\ y_2 \end{pmatrix} = M_\sigma \begin{pmatrix} y_1 \\ y_2 \end{pmatrix}, \quad M_\sigma = \begin{pmatrix} a & b \\ c & d.\end{pmatrix}
\end{equation}

The action of $\sigma$ on the Wronskian is:

\begin{equation}
    \sigma(W) = \sigma(y_1 y_2' - y_2 y_1') = \det(M_\sigma) W.
\end{equation}

Since $W$ is constant and $\sigma$ acts as the identity on $\mathbb{C}$, we have $\sigma(W) = W$. Therefore:

\begin{equation}
    \det(M_\sigma) W = W \implies \det(M_\sigma) = 1.
\end{equation}

Thus, it is formally established that $G \subseteq SL(2, \mathbb{C})$.
\end{proof}

For second-order equations in the normal form, the unit Wronskian restricts the Galois group to a subgroup of $SL(2, \mathbb{C})$. The structure of these subgroups was exhaustively determined using the Jordan--Kovacic classification \cite{kovacic1986algorithm}:

\begin{enumerate}
    \item \textbf{Case 1 (reducible):} $G$ is conjugated to a subgroup of the Borel group. The associated Riccati equation provides a rational solution in the base field $K$.
    \item \textbf{Case 2 (imprimitive):} $G$ is conjugated to a subgroup of the infinite dihedral group, implying the existence of a quadratic algebraic solution.
    \item \textbf{Case 3 (finite):} $G$ is a finite polyhedral group (tetrahedral, octahedral, or icosahedral), where all solutions are algebraic.
    \item \textbf{Case 4 (dense):} $G = SL(2, \mathbb{C})$. The group is not solvable, and the equation admits no Liouvillian solution.
\end{enumerate}

To prove the non-integrability of the Klein-Gordon equation with the $\alpha$-attractor potential, it is sufficient to refute the first three cases. This is performed by analyzing the existence of rational solutions to the Riccati equation $W' + W^2 = R_{\text{eff}}(z, t)$ within the differential field $K = \mathbb{C}(z)(e^{az})$. This algorithmic procedure, utilizing the Kovacic algorithm to analyze Riccati equations, is consistent with the methodologies used to solve Ince’s differential equations in quantum optics \cite{AcostaHumanez2013}.

\subsection{Exclusion Methodology}

For each Jordan--Kovacic classification case, we employed valuation analysis and singularity theory tools as follows:

\begin{itemize}
    \item \textbf{Case 1:} By analyzing residue balances on the Riemann sphere, we demonstrate that the Riccati equation $D(W) + W^2 = R_{\text{eff}}$ lacks solutions in the base field $K = \mathbb{C}(z)(t)$.
    \item \textbf{Case 2:} To rule out the existence of quadratic algebraic extensions, we examined the second symmetric power of the differential equation.
    
    \item \textbf{Case 3:} We utilized the irregularity exponent at the $t = \infty$ singularity (via Ramis' theorem) to exclude finite polyhedral groups.
\end{itemize}

\subsubsection{Case 1: Reducibility and the Borel Subgroup}

\subsubsection{Rigorous Definition of the Differential Field $K = \mathbb{C}(z)(t)$}

Object $K$ serves as the \textbf{coefficient field} for our differential system. It is constructed using a tower of algebraic and transcendental extensions:

\begin{enumerate}
    \item \textbf{Constant Base Field:} At lowest level, we have $\mathbb{C}$, the field of complex numbers. In the Picard--Vessiot theory, this is the \textbf{field of constants}, as for any $c \in \mathbb{C}$, the derivation yields $D(c) = 0$.
    \item \textbf{Field of Rational Functions $\mathbb{C}(z)$:} We introduce the spatial variable $z = \tanh(bx)$. The field $\mathbb{C}(z)$ comprises all rational functions in $z$, where any $f \in \mathbb{C}(z)$ is of the form $f(z) = P(z)/Q(z)$ with $P, Q \in \mathbb{C}[z]$. At this stage, the differential operator is the standard derivative $D = d/dz$.
    \item \textbf{Exponential Extension $t = e^{az}$:} This step introduces transcendence. We add a new variable $t$ representing the exponential function:
    \begin{itemize}
        \item \textbf{Transcendence:} $t$ is transcendental over $\mathbb{C}(z)$; no non-zero polynomial $P(t)$ with coefficients in $\mathbb{C}(z)$ exists such that $P(t) = 0$.
        \item \textbf{Differential Relation:} The identity of $t$ is defined by its behavior under derivation: $t' = \frac{d}{dz}(e^{az}) = at$.
    \end{itemize}
    \item \textbf{Total Field $K = \mathbb{C}(z)(t)$:} Finally, $K$ is the \textbf{field of fractions} of the polynomial ring $\mathbb{C}(z)[t]$. A generic element $W \in K$ is the ratio of polynomials in $t$ to rational coefficients in $z$.
    \item \textbf{Total Derivation Operator $D$:} To maintain $K$ as a differential field, we define the derivation of any $W(z, t)$ using the chain rule:
    \begin{equation}
        D = \dfrac{\partial}{\partial z} + \dfrac{dt}{dz} \dfrac{\partial}{\partial t} = \partial_z + at \partial_t.
    \end{equation}
\end{enumerate}

\textbf{In summary}, $K$ is the smallest algebraic space containing all rational functions of $z$ and the exponential $e^{az}$, closed under derivation. If the Riccati solution does not reside in this field, then the original system cannot be Liouvillian.

\subsubsection{Non-existence of Rational Solutions in $K$}

The analysis of the reducibility of the Galois group $G$ is equivalent to searching for solutions to the Riccati equation within the base field $K = \mathbb{C}(z)(t)$.

\begin{lemma}
(Darboux elements and invariance). Let $q \in \mathbb{C}(z)[t]$ be an irreducible polynomial. If $q$ is a Darboux polynomial for  derivation $D$ (i.e., $D(q) = \lambda q$ for some $\lambda \in K$), then $q \propto t$.
\end{lemma}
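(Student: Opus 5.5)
We treat $q\in\mathbb{C}(z)[t]$ as a polynomial in $t$ over $F=\mathbb{C}(z)$ and use the fact that $D=\partial_z+at\partial_t$ preserves $t$-degree. Write $q=\sum_{k=0}^n c_k(z)t^k$ with $c_n\neq 0$ and normalize so that $c_n=1$; since $F$ is a field, this does not affect irreducibility or the Darboux property. Then
\[
D(q)=\sum_{k=0}^{n}\bigl(c_k'+ak\,c_k\bigr)t^k ,
\]
so $\deg_t D(q)\le n$ and the leading coefficient of $D(q)$ is $an$.

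The first step is to show that $\lambda$ must lie in $F$. Since $D(q)=\lambda q$ with both $q$ and $D(q)$ in $F[t]$, we have $\lambda=D(q)/q\in K$. Comparing degrees in $F(t)$ shows that $\lambda$ is a polynomial in $t$ of degree $0$. Indeed, if $\lambda=A/B$ in lowest terms with $B\neq$ const, then $B$ divides $q$ and $q$ is irreducible, so $B\sim q$. This would force $q\mid A\cdot q$ while $D(q)=A$, and then $\deg_t A\le n$ together with $A$ coprime to $q$ gives a contradiction only via degree counting. Cleanly: $q$ divides $\lambda q$ in $F[t]$ precisely when $\lambda q\in F[t]$, and the degree bound gives $\deg\lambda\le 0$. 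So $\lambda\in F$, and comparing leading coefficients gives $\lambda=an$.

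Next, compare the remaining coefficients. For each $k$ we get $c_k'+akc_k=an\,c_k$, that is,
\[
c_k'=a(n-k)c_k .
\]
For $k<n$ this says $c_k$ is a rational function satisfying $c_k'/c_k=a(n-k)$, a nonzero constant (assuming $a\neq0$, the nontrivial case, as elsewhere in the paper). A nonzero rational function cannot have a logarithmic derivative equal to a nonzero constant. This is where the transcendence of $e^{az}$ over $\mathbb{C}(z)$ enters: the logarithmic derivative of a rational function is a sum of residue terms $\sum m_i/(z-\alpha_i)$ and vanishes at $\infty$. Hence $c_k=0$ for all $k<n$, and $q=t^n$. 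Irreducibility then forces $n=1$, so $q\propto t$.

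The degenerate case $n=0$, where $q\in F$ is a unit in $F[t]$ and hence not an irreducible polynomial, is excluded by convention. For $a=0$ the statement fails, since every element of $F[t]$ would then be Darboux with respect to $\partial_z$ on coefficients, so we note this hypothesis explicitly.

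The only delicate step is the logarithmic-derivative argument for rational $c_k$. We handle it by partial fractions, expanding $c_k'/c_k$ at $z=\infty$, where it is $O(1/z)$ and therefore cannot equal a nonzero constant.
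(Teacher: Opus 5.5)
Your coefficient comparison is the same as the paper's. Once $\lambda$ is known to lie in $F=\mathbb{C}(z)$, you get $c_k'=a(n-k)c_k$. Your partial-fraction argument (the logarithmic derivative of a nonzero rational function is $O(1/z)$ at $\infty$) is a correct and slightly more explicit substitute for the paper's step ``$c_i/c_m=Ct^{m-i}$ with $t^{m-i}$ transcendental''. You also correctly use irreducibility to pass from $q=t^n$ to $n=1$, which the paper glosses over.

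The gap is the step where you deduce $\lambda\in F$ from $\lambda\in K$; that deduction is false. If $\lambda=A/B$ in lowest terms with $B\sim q$, then $D(q)=cA$ with $A$ coprime to $q$, and nothing contradicts this. For $q=t-1$ one has $D(q)=at$, which is coprime to $t-1$. Likewise, $\deg_t\lambda\le 0$ for a rational function of $t$ does not make it a polynomial. Nor is ``$\lambda q\in F[t]$'' the same as ``$q\mid\lambda q$ in $F[t]$'', since the latter means $\lambda\in F[t]$. In fact, if $\lambda$ may range over $K$, the hypothesis is vacuous, because $\lambda=D(q)/q$ always works. The statement as literally written is then false: $q=t-1$ is irreducible, satisfies $D(q)=\frac{at}{t-1}\,q$, and is not proportional to $t$.

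What is missing is a hypothesis, not an argument. You must read the Darboux condition in its standard sense, $q\mid D(q)$ in $F[t]$, i.e.\ $\lambda\in F[t]$. Then $\deg_t\lambda=\deg_t D(q)-n\le 0$ gives $\lambda\in F$ immediately. After your monic normalization, which replaces $\lambda$ by $\lambda-c_n'/c_n$ (still in $F[t]$), the leading coefficient gives $\lambda=an$, and the rest of your proof goes through.

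The paper's proof makes this same assumption tacitly, by comparing coefficients of $t^i$ as though $\lambda$ were independent of $t$. You should state it explicitly rather than try to derive it.

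A smaller point: your remark that for $a=0$ every element of $F[t]$ is Darboux is inaccurate. For monic $q$ one needs $c_k'=0$, i.e.\ constant coefficients. Your conclusion that the lemma fails for $a=0$ is still right (take $t-1$ with $\lambda=0$).
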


\begin{proof}
Let $q = \sum_{i=0}^m c_i(z) t^i$ with $c_m \neq 0$. The Darboux condition $D(q) = \lambda q$ under the operator $D = \partial_z + at\partial_t$ implies that:

\begin{equation}
    \sum_{i=0}^m [c_i'(z) + ai c_i(z)] t^i = \lambda \sum_{i=0}^m c_i(z) t^i.
\end{equation}

Comparing the degree $m$ terms yields $\lambda = \frac{c_m'}{c_m} + am$. For any other coefficient $c_i$, we obtain:

\begin{equation}
    \dfrac{d}{dz} \left( \ln \dfrac{c_i}{c_m} \right) = a(m-i) \implies \dfrac{c_i}{c_m} = C t^{m-i}.
\end{equation}

Since $c_i, c_m \in \mathbb{C}(z)$, their ratio must be rational. However, $t^{m-i}$ is transcendental over $\mathbb{C}(z)$ when $i < m$. This requires $C = 0$, forcing $q$ to be  monomial, $q \propto t$.
\end{proof}

\begin{proposition}
(Non-existence of Rational Solutions). The Riccati equation $D(W) + W^2 = R_{\text{eff}}$ admits no solution in the differential field $K = \mathbb{C}(z)(t)$.
\end{proposition}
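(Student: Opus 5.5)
The plan is to combine the Darboux lemma just proved with a degree argument in $t$, which reduces the problem to the polynomial ansatz already used in the preliminary computation. Write a putative solution $W\in K=\mathbb{C}(z)(t)$ as $W=A/B$ with $A,B\in\mathbb{C}(z)[t]$ coprime and $B$ monic in $t$. Substituting into $D(W)+W^2=R_{\text{eff}}$ and clearing denominators gives
\begin{equation}
B\,D(A)-A\,D(B)+A^2 = R_{\text{eff}}\,B^2 .
\end{equation}
Let $q$ be an irreducible factor of $B$ of multiplicity $k$. Reducing modulo $q$, the terms with a full factor $B^2$ vanish. The condition that the remaining terms cancel at order $q^{k+1}$ forces $q\mid D(q)$, exactly as in the standard argument for the partial-fraction poles of a Riccati solution. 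Since $\deg_t D(q)\le\deg_t q$, this means $q$ is a Darboux polynomial. The Lemma then gives $q\propto t$, so $B=t^k$ and $W\in\mathbb{C}(z)[t,t^{-1}]$.

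Next, write $W=\sum_{j=-k}^{n} w_j(z)t^j$ and compare extreme powers of $t$, using $D(w_jt^j)=(w_j'+ajw_j)t^j$. If $k>0$, the lowest term $w_{-k}^2t^{-2k}$ must vanish, because $R_{\text{eff}}$ has no negative powers of $t$; so $k=0$. If $n>1$, the top coefficient $w_n^2t^{2n}$ cannot be matched. Since $R_{\text{eff}}$ has degree exactly $2$ in $t$ (with $V_0\neq 0$), we get $n=1$ and $W=w_1t+w_0$.

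The remaining step is the coefficient comparison, which I will redo carefully with the correct signs of $r_2,r_1,r_0$. At order $t^2$ we get $w_1^2=r_2$, so $w_1=\pm V_0/(b(1-z^2))$ with the appropriate factor of $i$ depending on the sign convention. At order $t^1$ we get $w_1'+aw_1+2w_1w_0=r_1$, which determines $w_0$ uniquely in $\mathbb{C}(z)$. At order $t^0$ we get the compatibility condition $w_0'+w_0^2=r_0$.

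The obstruction should come from local analysis at $z=1$ and at $z=\infty$. The expression for $w_0$ contains the constant $-a/2$, because $aw_1/(2w_1)=a/2$. Hence $w_0^2$ contributes $a^2/4$ at $z=\infty$, while $r_0=O(z^{-4})$ there. For $a\neq0$ this can only be cancelled by other constant terms in $w_0^2$, and there are none, since the remaining parts of $w_0$ are $O(1/z)$. Comparing constant terms at $z=\infty$ should therefore already give $a^2/4=0$, a contradiction. This is cleaner than the pole-order argument in the preliminary sketch, which I would use only as a cross-check.

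The step I expect to be the main obstacle is the first: making the Darboux reduction rigorous, in particular checking that every pole of $W$ in $t$ is simple and sits at an invariant factor. I would handle it with the valuation argument at $q$. The dominant term is $-kA\,D(q)/q^{k+1}$ against $A^2/q^{2k}$, which forces $k=1$ and $q\mid D(q)$ unless the leading residue vanishes. A vanishing residue contradicts coprimality of $A$ and $B$.
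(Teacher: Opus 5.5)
Your overall route matches the paper's: pole analysis in $t$, reduction to a Laurent polynomial, degree balance, then coefficient comparison. However, your reduction step runs backwards. Write $B=q^kB_1$ with $q\nmid B_1$. If $q\mid D(q)$, then $q^k\mid D(B)$, so every term on the right of $A^2=R_{\text{eff}}B^2-B\,D(A)+A\,D(B)$ is divisible by $q^k$. This forces $q\mid A$, contradicting coprimality. So a Darboux factor can never appear in the denominator; in particular $t\nmid B$, the opposite of your conclusion $B=t^k$. For $q\not\propto t$, which is non-Darboux by the Lemma, one has $v_q(D(W))=-k-1$. This rules out $k\ge 2$, but $k=1$ is perfectly consistent. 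The double pole of $W^2$ is cancelled by the double pole $-A\,D(q)/(B_1q^2)$ of $D(W)$ (not by $R_{\text{eff}}$) as soon as $A\equiv B_1D(q)\pmod q$, and that congruence requires $q\nmid D(q)$. Such poles are just zeros of a solution $y$, and no local argument excludes them. The correct normal form is therefore $W=D(Y)/Y+w_1t+\tilde w_0$, with $Y\in\mathbb{C}(z)[t]$ monic, squarefree, of $t$-degree $d\ge 0$, and $t\nmid Y$. The paper's step 1 has the same oversight: it assumes only $R_{\text{eff}}$ could compensate the double pole of $W^2$.

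This breaks your final step when $d\ge1$. Expanding at $t=\infty$ gives $W=w_1t+\omega_0+\omega_{-1}t^{-1}+\cdots$, where $\omega_{-1}=y_{d-1}'-ay_{d-1}$ and $y_{d-1}$ is the subleading coefficient of $Y$. The $t^1$ comparison still gives $\omega_0=w_0^*$, your value, so $\tilde w_0=w_0^*-ad$. But the $t^0$ comparison becomes $(w_0^*)'+(w_0^*)^2+2w_1\omega_{-1}=r_0$. Since $w_1=O(z^{-2})$ and $y_{d-1}$ may be a quadratic polynomial, the extra term can absorb $a^2/4$ at $z=\infty$; for $m=0$ this equation is actually solvable. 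So the contradiction $a^2/4=0$ disappears.

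A repair: because $t\nmid Y$, $W$ is a power series in $t$ over $\mathbb{C}(z)$ with constant term $\omega=w_0^*-ad+y_0'/y_0$, where $y_0=Y|_{t=0}\neq0$. The $t^0$ coefficient of the Riccati equation then gives $\omega'+\omega^2=r_0$. At $z=\infty$ we have $\omega\to-a(d+\tfrac12)$ while $r_0=O(z^{-4})$, so $a^2(d+\tfrac12)^2=0$, which is impossible for $a\neq0$. This contains your $d=0$ computation as a special case and closes the proof.
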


\begin{proof}
We perform a discrete valuation analysis $v_P$ over the irreducibles of $K$:

\begin{enumerate}
    \item \textbf{Finite Poles:} For  irreducible $q \neq t$, a pole of order $n \ge 1$ in $W$ requires a balance of $-2n = -n-1$, implying $n=1$. However, the resulting second-order pole in $W^2$ cannot be compensated for $R_{\text{eff}}$, which is a polynomial in $t$. Thus, $W \in \mathbb{C}(z)[t, t^{-1}]$.
    \item \textbf{Asymptotic Balance:} As $t \to \infty$, balancing $W^2 \sim d_k^2 t^{2k}$ with $R_{\text{eff}} \sim A(z) t^2$ forces $k=1$. The only possible ansatz is $W = c(z)t + d(z)$.
\end{enumerate}

Substituting this and isolating $t^0$ terms yields $d'(z) + d(z)^2 = \mathcal{R}(z)$. Near the poles $z = \pm 1$, the residue equation $\gamma^2 - \gamma - \mathcal{R}_s = 0$ lacks consistent rational solutions across the Riemann sphere $\mathbb{P}^1$.
\end{proof}

\subsubsection{Case 2: The Imprimitive Subgroup (Dihedral)}

\begin{proposition}
There is no non-zero rational invariant $b \in K$ that satisfies the second symmetric power equation $D^3(b) - 4R_{\text{eff}} D(b) - 2D(R_{\text{eff}})b = 0$.
\end{proposition}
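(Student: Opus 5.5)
We will show that the third-order operator $L_3(b)=D^3(b)-4R_{\text{eff}}D(b)-2D(R_{\text{eff}})b$ has no non-zero solution in $K=\mathbb{C}(z)(t)$. We write $R_{\text{eff}}=r_2t^2+r_1t+r_0$ with $r_j\in\mathbb{C}(z)$, $r_2\neq 0$, and all $r_j$ having denominators dividing $(1-z^2)^2$. The argument mirrors the one used for the Riccati equation in the previous Proposition. First we restrict the possible denominators of $b$. Then we bound its degree in $t$. Finally we compare coefficients of powers of $t$.

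\textbf{Step 1 (denominators).} We write $b=p/q$ with $p,q\in\mathbb{C}(z)[t]$ coprime and factor $q$ into irreducibles of $\mathbb{C}(z)[t]$. Let $s\neq t$ be an irreducible factor of $q$ of multiplicity $n\ge1$. By the Darboux Lemma, $s$ does not divide $D(s)$. Hence $D^3(b)$ has $s$-adic valuation exactly $-n-3$, with leading part proportional to $-n(-n-1)(-n-2)\,p\,D(s)^3/s^{n+3}$. The other terms $4R_{\text{eff}}D(b)$ and $2D(R_{\text{eff}})b$ have valuation at least $-n-1$, since $R_{\text{eff}}$ has no poles at $s$. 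This leading term therefore cannot cancel, and $q$ must be a power of $t$ times an element of $\mathbb{C}(z)$. It follows that $b=\sum_{k=-N}^{M}b_k(z)t^k$ is a Laurent polynomial in $t$ with coefficients in $\mathbb{C}(z)$.

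\textbf{Step 2 (degree bound at $t=\infty$ and $t=0$).} Using $D(b_kt^k)=(b_k'+akb_k)t^k$, the coefficient of the top power $t^{M+2}$ in $L_3(b)$ comes only from the $R_{\text{eff}}$ terms. It equals
\[
-4r_2(b_M'+aMb_M)-2(r_2'+2ar_2)b_M .
\]
Setting this to zero gives a first-order equation, $b_M'/b_M=-aM-a-r_2'/(2r_2)$. Its solutions are $b_M\propto r_2^{-1/2}e^{-a(M+1)z}$. This is rational in $z$ only if $M=-1$, because $r_2^{-1/2}\propto(1-z^2)$ is rational. The same analysis at the lowest power $t^{-N}$ uses the $t^0$ part of $R_{\text{eff}}$ and the $D^3$ term. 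It yields a third-order equation for $b_{-N}$ with an exponential factor $e^{-aNz}$ twisting the operator. We then argue that the only admissible possibility is a very small Laurent support, so that $b=b_{-1}t^{-1}+\dots$. Matching the next powers should then force a contradiction, because the $t^{M+2}$ and $t^{M+1}$ equations become inconsistent once $r_1=2EV_0/b^2(1-z^2)^2$ is nonzero.

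\textbf{Step 3 (remaining coefficient equations).} The surviving cases reduce to equations over $\mathbb{C}(z)$ whose only singularities are $z=\pm1$ and $\infty$. At $z=\pm1$ we compute local exponents of $L_3$ restricted to each coefficient. These exponents are $2\rho_i$ and $\rho_1+\rho_2$, where $\rho$ solves $\rho^2-\rho-\mathcal{R}_s=0$ with $\mathcal{R}_s$ the leading coefficient of $r_0$. We then apply the same residue/Fuchs-relation bookkeeping on $\mathbb{P}^1$ as in the Case 1 Proposition: the sum of exponents must be a non-positive integer for a rational solution to exist, and for generic physical parameters these exponents are non-integer. This excludes every case and proves the statement.

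\textbf{Main obstacle.} Step 2 is the hardest part. The top coefficient gives a clean first-order equation, but the bottom coefficient $t^{-N}$ leads to a genuine third-order equation over $\mathbb{C}(z)$ with an exponential twist. Ruling out its rational solutions requires analysing the irregular point $z=\infty$. I would first try exponent/Newton-polygon analysis there, using the twist $e^{-aNz}$ to show that no rational solution can exist when $aN\neq0$.
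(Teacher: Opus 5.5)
Your Step 1 (reduction to a Laurent polynomial in $t$ via the Darboux lemma) and your top-coefficient computation are correct. They are more careful than the paper, which simply asserts that $b$ is a polynomial in $t$ and drops the terms $-4Ac_m'-2A'c_m$ from the $t^{m+2}$ coefficient. As a result the paper never confronts the case your computation isolates, $M=-1$.

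\textbf{The gap: disposing of $M=-1$.} The inconsistency you claim between the $t^{M+2}$ and $t^{M+1}$ equations does not occur. Write $r_j=c_j(1-z^2)^{-2}$ with $c_j$ constant.
\begin{itemize}
\item The $t^{1}$ equation gives $b_{-1}=C(1-z^2)$.
\item Substituting $b_{-2}=(1-z^2)v$, the $t^{0}$ equation reduces to $v'-av=ac_1C/(2c_2)$.
\item This has the rational solution $v=-c_1C/(2c_2)$, i.e.\ $b_{-2}=-\frac{c_1}{2c_2}\,b_{-1}$.
\end{itemize}
These are exactly the first two terms of $R_{\text{eff}}^{-1/2}$ expanded in powers of $t^{-1}$, which is the WKB approximation to $y_+y_-$. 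So nothing goes wrong at the top of the support.

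Step 3 cannot rescue this. At $z=\pm1$ the exponents of $y''=r_0y$ satisfy $\rho_1+\rho_2=1$ for every choice of parameters. Indeed $u=1-z^2$ solves $u'''-4r_0u'-2r_0'u=0$ identically, so exponent bookkeeping at $\pm1$ excludes nothing. In any case, an appeal to ``generic parameters'' would prove a weaker statement than the one claimed.

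\textbf{The missing step.} The fix is short, and it is where your ``main obstacle'' paragraph points. But it is needed at the \emph{bottom} of the Laurent support, and it requires no Newton polygon.
\begin{itemize}
\item Let $\ell$ be the lowest power of $t$ occurring in $b$. Since $M=-1$, we have $\ell\le -1$.
\item The $t^{\ell}$ coefficient of the equation is $(\partial_z+a\ell)^3b_\ell-4r_0(\partial_z+a\ell)b_\ell-2r_0'b_\ell=0$.
\item If $b_\ell\sim\beta z^d$ as $z\to\infty$, then $(\partial_z+a\ell)^3b_\ell=(a\ell)^3\beta z^d+O(z^{d-1})$.
\item The remaining terms are $O(z^{d-4})$, because $r_0=O(z^{-4})$ and $r_0'=O(z^{-5})$.
\item Hence $(a\ell)^3\beta=0$. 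With $a\neq0$ this forces $\ell=0$, contradicting $\ell\le -1$.
\end{itemize}
The contradiction therefore comes from the two ends of the support: the top forces $M=-1$ and the bottom forces $\ell=0$. It does not come from consecutive top coefficients, and it needs only $a\neq0$ and $V_0\neq0$, with no genericity assumption.
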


\begin{proof}
Any solution $b$ must be a polynomial $b = \sum_{j=0}^m c_j(z) t^j$. Evaluating the leading terms for $t^{m+2}$:

\begin{equation}
    -4(A(z)t^2)(am c_m t^m) - 2(2a A(z)t^2)(c_m t^m) = -4a A(z) c_m (m + 1) t^{m+2}.
\end{equation}

As $a, A(z), c_m \neq 0$ and $m+1 \ge 1$, the leading coefficient cannot vanish. This structural contradiction implies $b=0$.
\end{proof}

\subsubsection{Case 3: Finite Subgroups and Ramis' Theorem}

\begin{theorem}
(Ramis). If the Differential Galois Group $G$ is finite, then all singularities of the equation on the Riemann sphere $\mathbb{P}^1$ must be Fuchsian (regular).
\end{theorem}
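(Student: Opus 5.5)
The plan is to prove the contrapositive: if $y''=R(z)y$ has a singular point that is \emph{not} regular singular, then $G$ is infinite. The argument is local, so it works for coefficients in $\mathbb{C}(z)$ and, more generally, in any differential field of germs of meromorphic functions at the point in question.

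\textbf{Step 1: Finite group gives algebraic solutions.} Suppose $G$ is finite. The Picard--Vessiot extension $L/K$ is then a Galois extension of degree $|G|$ that is finite and algebraic. For any $y\in L$, the Galois orbit $\{\sigma(y):\sigma\in G\}$ is finite. Hence the polynomial $\prod_{\sigma\in G}(Y-\sigma(y))$ has coefficients fixed by $G$. By the Galois correspondence these coefficients lie in $K$, so every solution $y_1,y_2$ is algebraic over $K$. In particular, over $\mathbb{C}(z)$ every solution is an algebraic function of $z$.

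\textbf{Step 2: Algebraic functions have moderate growth.} Fix a singular point $z_0\in\mathbb{P}^1$ and a local parameter $s$ there, with $s=z-z_0$, or $s=1/z$ if $z_0=\infty$. An algebraic function admits a convergent Puiseux expansion
\begin{equation}
y(s)=\sum_{k\ge k_0} c_k\, s^{k/N},
\end{equation}
for some $N\in\mathbb{N}$. Consequently, on every sector $|y(s)|\le C|s|^{-\mu}$ for some $\mu$; that is, $y$ has moderate growth.

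\textbf{Step 3: Moderate growth forces a regular singularity.} By the Fuchs criterion, $z_0$ is a regular singular point if and only if every solution has moderate growth in sectors. Equivalently, one can argue through the formal structure. By the Hukuhara--Turrittin (Levelt--Turrittin) decomposition, a non-Fuchsian point admits a formal solution of the form $e^{q(s^{-1/N})}s^{\rho}\hat\phi(s^{1/N})$ with $q$ a non-constant polynomial. The irregularity is then positive, which is exactly the setting of Ramis' index theorem. The multisummation theorem turns this formal solution into an actual solution on some sector. That actual solution grows like $|e^{q}|$, which is faster than any power, on a suitable subsector. This contradicts Step 2.

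\textbf{Step 4: Conclusion, and the main obstacle.} Steps 1--3 together show that every singularity must be Fuchsian. The principal difficulty is Step 3: one has to justify that the formal exponential factor is actually realized by a true solution, which is the content of Ramis' summability and Stokes theory. I would simply cite van der Put--Singer for this, rather than reprove it.

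For the application in the paper, the natural point to examine is $t=\infty$, which corresponds to $z=\infty$. There $R_{\text{eff}}\sim A(z)e^{2az}$ grows exponentially, so the equation is not even meromorphic there. This is stronger than being irregular, and the Liouville--Green (WKB) asymptotics $y\sim R^{-1/4}\exp(\pm\int\sqrt{R})$ directly give non-moderate growth. So I would expect the exclusion of Case 3 to follow from that WKB estimate via Steps 1--2 alone, without the full Ramis machinery.
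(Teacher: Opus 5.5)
Your proof of the stated theorem is correct, and it differs from the paper, which gives no argument at all. The paper simply quotes the statement under Ramis' name, and what Proposition~3 actually leans on is a different piece of Ramis' theory: Stokes multipliers lying in $G^0$, applied formally in the variable $t$.

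Your chain is the classical argument: finite $G$ $\Rightarrow$ $L/K$ finite algebraic with $L^G=K$ $\Rightarrow$ solutions are algebraic functions of $z$ $\Rightarrow$ convergent Puiseux expansions, hence moderate growth $\Rightarrow$ regular singular by Fuchs' criterion. It shows that no summability or Stokes theory is needed, so you can drop the Hukuhara--Turrittin/multisummation variant of Step~3 and the allusion to Ramis' index theorem.

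You can even skip the growth criterion. If $y_1\neq 0$ is a Puiseux series at $z_0$, then $u=y_1'/y_1$ has at most a simple pole there (respectively $u=O(1/z)$ at $z_0=\infty$). Hence $R=u'+u^2$ has at most a double pole (respectively $R=O(z^{-2})$), which is exactly the Fuchsian condition for $y''=Ry$.

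What your version buys over a bare citation is an explicit hypothesis: the coefficient must be meromorphic at the point. This exposes something the paper glosses over. Its equation has regular singular points at $z=\pm1$ and an essential singularity at $z=\infty$, while ``$t=\infty$'' is not a point of the $z$-sphere at all. So, as you correctly observe, the theorem as stated says nothing about the group over $K=\mathbb{C}(z)(e^{az})$.

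Your closing suggestion, however, would not work as written. Over $K=\mathbb{C}(z)(e^{az})$, Step~1 only makes the solutions algebraic over $K$, not over $\mathbb{C}(z)$. Step~2 then fails at $z=\infty$: $e^{az}\in K$ itself has non-moderate growth there, so a WKB estimate showing non-moderate growth of the solutions contradicts nothing.

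To exclude a finite group along these lines you would need a finer growth scale, comparing two facts:
\begin{itemize}
\item On a region where $\mathrm{Re}(az)\to+\infty$ fast enough compared with $\log|z|$, every element algebraic over $K$ is bounded by $C|z|^{M}e^{M\,\mathrm{Re}(az)}$. This follows from Cauchy's bound on the roots of its minimal polynomial.
\item The WKB phase $\int\sqrt{R_{\text{eff}}}\,dz$ has size about $|V_0e^{az}/(ab\,z^{2})|$, so it would give double-exponential growth in suitable directions.
\end{itemize}
That WKB asymptotics would itself have to be justified rigorously near an essential singularity. So this is a different and harder argument than ``Steps~1--2 alone''.
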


\begin{proposition}
The Differential Galois Group $G$ is not a finite group.
\end{proposition}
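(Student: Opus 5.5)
The plan is to argue by contraposition using the Ramis theorem stated just above. If $G$ were finite, every solution of $y''=R_{\text{eff}}y$ would be algebraic over the coefficient field. It would then suffice to exhibit one singular point where the equation is irregular, that is, where solutions grow at an exponential rate. Such growth is incompatible with algebraicity, and by the Ramis theorem it rules out a finite group.

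\textbf{Choice of singular point.} I would work with the equation viewed over $\mathbb{C}(z)$ with $t=e^{az}$ substituted back. The natural candidate is $z=\infty$. There $t=e^{az}$ has an essential singularity, and the coefficient
\[
R_{\text{eff}}(z)\sim \frac{-V_0^2 e^{2az}}{b^2 z^4}
\]
grows faster than any power of $z$. This is the ``$t=\infty$'' irregularity mentioned in the exclusion methodology.

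\textbf{Key steps.} I would carry out the argument in the following order.
\begin{enumerate}
\item Write the WKB-type ansatz $y=\exp\left(\int W\right)$ and use the Riccati equation of the earlier Proposition. As $t\to\infty$ in a sector where $\mathrm{Re}(az)>0$, the leading balance $W^2\approx R_{\text{eff}}$ gives
\[
W\approx \pm\frac{iV_0}{b(1-z^2)}\,t .
\]
This is the same leading coefficient $w_1$ found in the Case 1 computation.
\item Integrate this leading term. Solutions then behave like
\[
\exp\!\left(\pm \frac{iV_0}{ab}\,\frac{e^{az}}{z^2}\,(1+o(1))\right),
\]
so their growth is doubly exponential in $z$.
\item Observe that an algebraic function of $z$ has at most polynomial growth, $|y|\le C|z|^N$, in any sector near $z=\infty$. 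The growth in step 2 contradicts this, so no nonzero solution can be algebraic. In particular $G$ cannot be finite.
\end{enumerate}

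Read in the language of the Ramis theorem, the same conclusion is that the Newton polygon in the local parameter has a nonzero slope. Equivalently, the irregularity exponent at the singularity is positive, so that singularity is not Fuchsian.

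\textbf{Parameter check.} I would also verify that the leading coefficient $-V_0^2/b^2$ is nonzero. This only requires $V_0\neq 0$ and $a\neq 0$, which holds for the nontrivial $\alpha$-attractor potential.

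\textbf{Main obstacle.} The point $z=\infty$ is an essential singularity of the coefficients themselves, so the Ramis theorem as stated on $\mathbb{P}^1$ for rational coefficients does not literally apply. I would first try to bypass this by the direct growth argument of steps 1--3. That argument needs only the following: if $G$ is finite, solutions are algebraic over $K$; hence $\exp\left(\int W\right)$ is algebraic over $\mathbb{C}(z,e^{az})$. One then has to show that $\exp\!\left(c\,e^{az}/z^2\right)$ is not algebraic over this field. I would prove that by a Liouville-type degree comparison in $t$, in the spirit of the Darboux-polynomial Lemma: differentiating a minimal polynomial and comparing $t$-degrees forces a contradiction, because the logarithmic derivative has $t$-degree $1$. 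Making this degree argument airtight is the step I expect to require the most care.
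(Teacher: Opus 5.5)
Your proposal is correct in its final, algebraic form, and it takes a different route from the paper.

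The paper works locally and analytically. It declares $t=\infty$ an irregular singularity of Poincar\'e rank one, writes Levelt--Turrittin formal solutions $e^{Q(t)}\phi(t)$, asserts nontrivial Stokes multipliers (hence a positive-dimensional unipotent subgroup of $G^0$), and appeals to Ramis' theorem. As you correctly point out, that machinery is stated for rational coefficients at a genuine point of $\mathbb{P}^1$. Here the coefficients lie in $\mathbb{C}(z,e^{az})$, and $t=\infty$ is only an essential singularity at $z=\infty$. Moreover, the exponential factor is really of degree one in $t$ (about $\pm\sqrt{A}\,t/a$, consistent with your Step 2). It is this nonzero factor, not the Stokes data, that rules out finiteness.

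Your replacement is a valuation argument over $K$ itself. The logarithmic derivative of a nonzero element algebraic over $K$ has $t$-degree at most $0$. On the other hand, $DW+W^2=R_{\text{eff}}$ with $\deg_t R_{\text{eff}}=2$ and $\deg_t DW\le\deg_t W$ forces $\deg_t W=1$. This argument is self-contained and needs only $V_0\neq 0$ and $a\neq 0$. It also proves more than the statement: no nonzero solution is algebraic over $K$.

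There are two points to tidy up.

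First, drop the growth comparison of Steps 1--3 as written. Functions algebraic over $K$ are not polynomially bounded; already $t=e^{az}\in K$ is not. For real parameters, the exponent $\pm\frac{iV_0}{ab}e^{az}z^{-2}$ is purely imaginary on the real axis, where solutions oscillate and decay rather than grow. Likewise, the target is not the asymptotic model $\exp(c\,e^{az}/z^2)$. It is an arbitrary $y$ whose logarithmic derivative solves the Riccati equation.

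Second, the $t$-degree comparison cannot be carried out on $K$ alone, for instance through the norm $\prod_{\sigma}\sigma(y)\in K$. Its logarithmic derivative is a sum of conjugate Riccati solutions whose contributions $\pm w_1 t$ can cancel. Instead, embed $K(y)$ over $K$ into the Puiseux field $\bigcup_N\overline{\mathbb{C}(z)}((t^{-1/N}))$ with the termwise derivation $D(c\,t^{\mu})=(c'+a\mu c)\,t^{\mu}$. By uniqueness of extensions of derivations to algebraic extensions, this agrees with the derivation of $K(y)$. Since $D$ never raises $t$-degree, $\deg_t(Dy/y)\le 0$ follows immediately.
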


\begin{proof}
As $t \to \infty$, $R_{\text{eff}} \sim A(z)t^2$.
\begin{enumerate}
    \item \textbf{Irregular Singularity:} Since $R_{\text{eff}} \propto t^2$, the point at infinity is an irregular singularity with Poincaré rank 1.
    \item \textbf{Formal Solutions:} According to the Levelt--Turrittin theorem, solutions take the form $\hat{y}(t) = e^{Q(t)} \phi(t)$, where $Q(t) \approx \pm \frac{\sqrt{A(z)}}{2} t^2$.
    \item \textbf{Finiteness Exclusion:} The non-zero determining polynomial $Q(t)$ implies nontrivial Stokes multipliers. This indicates that identity component $G^0$ contains a unipotent subgroup of positive dimensions, which is incompatible with a discrete finite group.
\end{enumerate}
\end{proof}

\subsection{Identification of Group $SL(2, \mathbb{C})$ and not Liuvillian solution}

Having exhaustively tested the categories of the Jordan-Kovacic classification, we now establish the primary result regarding the nature of the solutions to the Klein-Gordon equation under the potential $V(x) = V_0 \exp(a \tanh(bx))$.

\begin{theorem}
The Differential Galois Group $G$ of the normal form equation $y'' = R_{\text{eff}}(z, t)y$, defined over the differential field $K = \mathbb{C}(z)(t)$, is the full special linear group:
\begin{equation}
    G = SL(2, \mathbb{C}).
\end{equation}
\end{theorem}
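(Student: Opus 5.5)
The argument is by exhaustion over the Jordan--Kovacic classification listed above. By the Proposition on invariance of the Wronskian, $G$ is an algebraic subgroup of $SL(2,\mathbb{C})$, and every such subgroup falls into one of four mutually exclusive types: reducible (conjugate into the Borel group), imprimitive (conjugate into the infinite dihedral group), finite polyhedral, or all of $SL(2,\mathbb{C})$. It therefore suffices to rule out the first three with the three propositions just proved, and then conclude $G=SL(2,\mathbb{C})$. The plan has four steps.

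\emph{Step 1: reducible case.} I would recall the standard dictionary: $G$ is reducible if and only if $y''=R_{\text{eff}}y$ has a solution $y$ with $y'/y\in K$. In that case $W=y'/y$ solves $D(W)+W^2=R_{\text{eff}}$ in $K$. The Proposition on non-existence of rational solutions forbids this, so $G$ is irreducible.

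\emph{Step 2: imprimitive case.} If $G$ is irreducible but imprimitive, it stabilises a pair of lines $\{\ell_1,\ell_2\}$. Let $y_1,y_2$ span these lines. Then $b=y_1y_2$ is semi-invariant, and after passing to the index-two subgroup it is invariant up to sign. Hence $b^2$, or $b$ itself for the suitable normalisation, lies in $K$. Since $y_1,y_2$ solve the equation, $b$ solves the second symmetric power $D^3b-4R_{\text{eff}}Db-2D(R_{\text{eff}})b=0$. The Case 2 Proposition says no non-zero such $b$ exists in $K$. I would also check that the semi-invariant case, where $b$ is multiplied by a character, reduces to a rational solution of the symmetric square after accounting for the exponential factor $t$. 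This should follow from the Lemma on Darboux elements, since any hyperexponential factor over $K$ must be a power of $t$ times an element of $K$.

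\emph{Step 3: finite case.} The Proposition stating that $G$ is not finite uses Ramis' theorem and the irregular singularity to exclude all finite subgroups, including the tetrahedral, octahedral and icosahedral groups.

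\emph{Step 4: conclusion.} The only remaining alternative is that $G$ is an infinite, irreducible, primitive algebraic subgroup of $SL(2,\mathbb{C})$. I would argue that its identity component $G^0$ is then a connected algebraic subgroup of positive dimension. Every proper connected subgroup of $SL(2,\mathbb{C})$ is solvable, being conjugate into the Borel group, a torus, or a unipotent group. The normaliser of each of these is either reducible or imprimitive. So $G^0$ cannot be proper, which gives $G^0=SL(2,\mathbb{C})$ and hence $G=SL(2,\mathbb{C})$. Combined with the Picard--Vessiot theorem above, this also yields non-Liouvillianity.

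\textbf{Expected main obstacle: Step 2.} The Case 2 Proposition only excludes polynomial invariants in $t$, whereas the dihedral criterion needs the semi-invariant version with possible poles. I would first try to extend the valuation argument of the Case 1 Proposition to rational functions of $t$. The Darboux Lemma restricts the possible poles to $t=0$, so $b$ can be taken in $\mathbb{C}(z)[t,t^{-1}]$. I would then rerun the leading-coefficient computation at both ends, $t\to\infty$ and $t\to0$, so that the coefficient $-4aA(z)c_m(m+1)$ is shown to be non-vanishing for every integer exponent $m$, including negative ones. Here $m=-1$ is the delicate case and would need the lower-order $t$ terms of $R_{\text{eff}}$.
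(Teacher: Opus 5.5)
Your skeleton is the paper's: $G\subseteq SL(2,\mathbb{C})$ by the Wronskian, Propositions 1--3 exclude the reducible, imprimitive and finite cases, and the classification leaves $SL(2,\mathbb{C})$. Your Step 4, via $G\subseteq N(G^0)$ and the list of proper connected subgroups, correctly expands the paper's last sentence. The gap is Step 2. You rightly flag it, and the paper's proof has the same hole, but your repair does not work as written. Write $\beta=y_1y_2$ to avoid a clash with the parameter $b$.

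(i) In the irreducible imprimitive case, $\beta\notin K$ under any normalisation. An element swapping the two lines acts by $y_1\mapsto cy_2$, $y_2\mapsto -c^{-1}y_1$, so $\beta\mapsto-\beta$. Only $\beta^2\in K$, so Proposition 2, which is stated for $\beta\in K$ and proved only for $\beta\in\mathbb{C}(z)[t]$, never applies.

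(ii) Hyperexponential elements over $K$ are not of the form $t^k\cdot K$; take $\sqrt{1-z^2}$ or $e^{z^2}$. What is true here is $\beta=\pm\sqrt{u(z)}\,t^{k/2}P$ with $u\in\mathbb{C}(z)$, $k\in\mathbb{Z}$, $P\in\mathbb{C}(z)[t]$. The reason is that at an irreducible $q\neq t$ one has $q\nmid D(q)$ (your Darboux lemma). The $q^{-3}$ term of the induced equation for $\theta=\beta'/\beta\in K$ then gives $\lambda(\lambda-1)(\lambda-2)=0$ with $\lambda=v_q(\beta^2)/2$. So you must allow exponents in $\frac{k}{2}+\mathbb{Z}$ and coefficients algebraic over $\mathbb{C}(z)$, not just $\mathbb{C}(z)[t,t^{-1}]$.

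(iii) The paper's leading coefficient drops the $z$-derivatives. With $A$ the $t^2$-coefficient of $R_{\text{eff}}$ and top exponent $N$, the $t^{N+2}$ coefficient is $-4Ac_N'-2A'c_N-4a(N+1)Ac_N$. This vanishes for $c_N\propto A^{-1/2}e^{-a(N+1)z}$, which is algebraic exactly when $N=-1$ (then $c_{-1}\propto 1-z^2$). So the top end does not exclude every exponent; it \emph{forces} $N=-1$, and no refinement at $t=\infty$ removes that case.

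The missing idea is to get the contradiction from the bottom end instead. Let $n$ be the lowest exponent. The $t^n$ terms say that $e^{anz}c_n$ solves the symmetric square of $y''=r_0y$, where $r_0$ is the $t^0$-coefficient of $R_{\text{eff}}$. That is the free ($V_0=0$) equation: it is Fuchsian, with solutions $\sqrt{1-z^2}$ times powers of $(1+z)/(1-z)$. Since $a\neq0$ and $c_n$ is algebraic, this forces $n=0$, contradicting $n\le N=-1$. That closes the imprimitive case.

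Propositions 1 and 3, which you take on trust as the paper does, also have defective printed proofs. Proposition 1 discards the simple poles $D(q)/q$ of $W$; these are cancelled by $D(W)$ and are exactly how zeros of $y$ appear. Proposition 3 applies Ramis' theorem with $t$ treated as the independent variable. Both conclusions can be recovered by the same $t$-adic reasoning:
\begin{itemize}
\item For Proposition 1, write $W=D(P)/P+w_1t+w_0$. The top end gives $w_0$ a constant part $-a(2\deg_t P+1)/2$, which the bottom end cannot accommodate.
\item For Proposition 3, any $y\neq0$ algebraic over $K$ has $\deg_t(y'/y)\le 0$ in its Puiseux expansion at $t=\infty$, whereas the Riccati equation forces degree $1$.
\end{itemize}
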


\begin{proof}
By construction, Differential Galois Group $G$ is an algebraic subgroup of $SL(2, \mathbb{C})$. The exhaustive classification of linear algebraic subgroups is as follows:
\begin{enumerate}
    \item $G$ is not \textbf{reducible} (Proposition 1), which precludes it from being a subgroup of the Borel Group.
    \item $G$ is not \textbf{imprimitive} (Proposition 2), thereby discarding an infinite dihedral structure.
    \item $G$ is not a \textbf{finite group} (Proposition 3), excluding the polyhedral groups.
\end{enumerate}
As these constitute the only types of proper algebraic subgroups of $SL(2, \mathbb{C})$ under the Zariski topology, it necessarily follows that $G = SL(2, \mathbb{C})$.
\end{proof}

\begin{theorem}
The original Klein-Gordon differential equation does not admit solutions expressible in terms of Liouvillian functions.
\end{theorem}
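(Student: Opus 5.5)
The plan is to deduce the statement from the preceding theorem ($G = SL(2,\mathbb{C})$) together with the Picard--Vessiot criterion (Theorem 1). I would then transport the conclusion from the normal form $y'' = R_{\text{eff}}y$ back to the original Klein--Gordon equation in the variable $x$, checking that each transformation preserves the Liouvillian class.

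\textbf{Step 1: the normal form.} By the previous theorem, the Galois group of $y'' = R_{\text{eff}}(z,t)y$ over $K = \mathbb{C}(z)(t)$ is $SL(2,\mathbb{C})$.
\begin{enumerate}
  \item It is connected, so $G^0 = G$.
  \item $SL(2,\mathbb{C})$ is not solvable, since its Lie algebra $\mathfrak{sl}_2$ is simple.
  \item Theorem 1 then says that $y'' = R_{\text{eff}}y$ has no nonzero solution lying in a Liouvillian extension of $K$.
\end{enumerate}

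\textbf{Step 2: undoing the dependent-variable change.} The substitution was $\Psi(z) = (1-z^2)^{-1/2}y(z)$. The factor $f = (1-z^2)^{-1/2}$ is algebraic over $\mathbb{C}(z)$, so $f$ and $f^{-1}$ both lie in a Liouvillian extension of $K$. Suppose $\Psi$ were Liouvillian over $K$. Then $y = f^{-1}\Psi$ would also be Liouvillian over $K$, which contradicts Step 1.

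\textbf{Step 3: undoing the change of independent variable.} This is the step I expect to be the main obstacle. The Liouvillian notion must be set up with care, because Liouvillian over $K$ in the variable $z$ has to be compared with Liouvillian in $x$ over the natural coefficient field of the original equation. That field is $K_x = \mathbb{C}(e^{bx}, e^{a\tanh(bx)})$, and $z = \tanh(bx)$ lies in it.

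I would argue as follows.
\begin{enumerate}
  \item The map $x \mapsto z$ induces a differential embedding of $(K, d/dz)$ into $K_x$, with derivations related by $d/dx = b(1-z^2)\,d/dz$. The derivations therefore differ by multiplication by the unit $b(1-z^2)$ of $K$.
  \item Adjoining an integral, an exponential of an integral, or an algebraic element with respect to $d/dx$ corresponds to the same kind of adjunction with respect to $d/dz$. For example, $\int g\,dx = \int g/(b(1-z^2))\,dz$.
  \item Hence Liouvillian towers correspond. The only new element is $e^{bx}$, which is algebraic over $\mathbb{C}(z)$ because $e^{2bx} = (1+z)/(1-z)$.
  \item It follows that a Liouvillian solution $\Psi(x)$ over $K_x$ would give a Liouvillian $\Psi(z)$ over $K$, contradicting Step 2.
\end{enumerate}

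\textbf{Alternative for Step 3.} If the tower correspondence proves delicate, I would fall back on Galois-group invariance. Picard--Vessiot extensions are unchanged by such a change of independent variable, up to the algebraic extension $K_x / K$. The identity component is unaffected by finite algebraic extensions of the base field. So the identity component of the Galois group over $K_x$ is still $SL(2,\mathbb{C})$, and Theorem 1 applied over $K_x$ gives the same conclusion directly.
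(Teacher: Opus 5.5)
Your argument is correct. Its core, Step 1 ($G=SL(2,\mathbb{C})$ is connected and non-solvable, so the Picard--Vessiot criterion excludes Liouvillian solutions), is exactly the paper's proof. Like the paper, you inherit everything from the preceding theorem, including the tacit assumption $abV_0\neq 0$; otherwise the equation has constant coefficients and exponential solutions.

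Your Steps 2--3 correctly supply the transfer from $y''=R_{\text{eff}}y$ back to $\Psi(x)$, which the paper leaves implicit. This transfer is more routine than you anticipate: $K=\mathbb{C}(\tanh bx, e^{a\tanh bx})$ is already closed under $d/dx=b(1-z^2)\,d/dz$, and $(1-z^2)^{-1/2}=\cosh bx$ is algebraic over $K$.
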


\begin{proof}
According to the \textbf{Picard-Vessiot Theorem}\cite{Kolchin1973}, a linear homogeneous differential equation is integrable by Liouvillian functions (comprising elementary functions, quadratures, and algebraic extensions) if and only if the connected component of the identity of its differential Galois group, $G^0$, is a solvable group.

Since we have identified $G = SL(2, \mathbb{C})$, and noting that $SL(2, \mathbb{C})$ is a connected Lie group ($G^0 = G$) and simple, it is inherently non-solvable. Consequently, the equation possesses no solutions in any Liouvillian extension of the base field $K$. This implies that the particle dynamics within this transcendental potential are intrinsically non-integrable.
\end{proof}

\subsection{Equation not integrable in terms of special functions}
\label{Equation not integrable in terms of special functions}
 
All classical and widely studied special functions arise from differential equations that are either Fuchsian or reducible to those with rational coefficients, even in the presence of irregular singularities. By contrast, genuinely non-algebrizable equations fall outside this framework and do not generate recognized families of special functions.

In this section we adopt a unifying perspective on special functions by emphasizing their common structural origin, all of wich arise as solutions of linear second-order differential equations whose coefficients are rational functions, or can be transformed into such a form through suitable changes of variables. This includes not only the hypergeometric hierarchy and orthogonal polynomial families, but also more intricate cases such as Mathieu, Lamé, and spheroidal wave functions, which initially involve transcendental or periodic coefficients. Despite this apparent complexity, these equations can be reduced via analytic or algebraic transformations to differential equations defined over a rational function field $\mathbb{C}(z)$, thereby placing them within the general framework of Fuchsian or confluent-type equations. Consequently, the notion of a “special function” is intrinsically tied to the algebrizability of the underlying differential equation, that is, to the possibility of expressing it in a form with rational coefficients. This structural property provides a natural boundary for classical integrability, and serves as a key criterion in what follows, where we analyze differential equations that fail to satisfy this condition and therefore lie beyond the scope of the standard theory of special functions \cite{NIST2010,Slavyanov2000,Ronveaux1995}.

Although many linear differential operators admit nontrivial covariance under pullbacks and conjugation, leading to reductions in lower-order structures or hypergeometric-type equations \cite{Abdelaziz_2017}, the present system does not exhibit such symmetries, reflecting the maximal complexity of its differential Galois group.

\subsection{Structural Obstruction to Algebrization and Special Function Representation}\label{sec:una}

In this section, we demonstrate that the Klein-Gordon equation under transcendental potential $f(x) = 2E e^{-\tanh x} - e^{-2\tanh x}$ cannot be reduced to an equation with rational coefficients. This implies that the system lies outside the classification of standard special functions (such as Hypergeometric or Heun functions), which are defined over the field of rational functions $\mathbb{C}(z)$.

\subsubsection{The Schwarzian Transform and the Invariant Normal Form}

Consider a second-order linear differential equation in its reduced (normal) form:
\begin{equation}
y''(x) + r(x)y(x) = 0.
\end{equation}

Under a general change in the independent variable $z = z(x)$, the second derivative transforms via the chain rule, introducing the first-order term $y_z$. To preserve the reduced form and eliminate this term, we must transform the dependent variable $y(x) = \phi(x) u(z)$. A direct calculation showed that the required normalization factor is $\phi(x) = (z')^{-1/2}$, leading to the following transformed equation:

\begin{equation}
\dfrac{d^2u}{dz^2} + \mathcal{Q}(z)u = 0, \quad \text{where} \quad \mathcal{Q}(z) = \dfrac{r[x(z)]}{(z')^2} - \frac{1}{2} \{z, x\}.
\end{equation}

The term $\{z,x\}$ is the \textbf{Schwarzian derivative}, defined as:

\begin{equation}
\{z, x\} = \dfrac{z'''}{z'} - \dfrac{3}{2}\left(\dfrac{z''}{z'}\right)^2.
\end{equation}

This invariant measures the deviation of the transformation $z(x)$ from the Möbius mapping. In the context of algebrization, the transformation is only successful if $\mathcal{Q}(z) \in \mathbb{C}(z)$.

\subsubsection{Transcendental Propagation via the Rational Ansatz}

To test the possibility of algebrization, assume there exists a transformation $Q(z) = e^{-\tanh x}$ such that $Q(z)$ is a rational function ($Q(z) \in \mathbb{C}(z)$). We examined the behavior of $z'(x)$ under this assumption. Differentiating $Q(z)$ with respect to $x$ yields:

\begin{equation}
Q'(z)\dfrac{dz}{dx} = -e^{-\tanh x} \operatorname{sech}^2(x).
\end{equation}

Utilizing the identities $\operatorname{sech}^2(x) = 1 - \tanh^2(x)$ and $\tanh(x) = -\ln Q(z)$, we isolate the derivative of the transformation:

\begin{equation}
z'(x) = \dfrac{dz}{dx} = -\dfrac{Q(z)\left[1 - \ln^2 Q(z)\right]}{Q'(z)}.
\end{equation}

This expression reveals a fundamental obstruction: the transformation $z(x)$ is intrinsically coupled with $\ln Q(z)$. Since $Q(z)$ is rational, its logarithm is a transcendental function ($\ln Q(z) \notin \mathbb{C}(z)$). Consequently, the inverse transformation $x(z)$ and all higher-order derivatives of $z$ with respect to $x$ inherit this transcendental nature.

\subsubsection{Conclusion: Non-Algebrizability}

The propagation of these logarithmic terms into the Schwarzian invariant $\{z,x\}$ and the scaling factor $(z')^{-2}$ ensures that the resulting potential $\mathcal{Q}(z)$ is not rational. Specifically:
\begin{enumerate}
    \item The term $\dfrac{r[x(z)]}{(z')^2}$ becomes a non-rational function of $z$ owing to the presence of $\ln Q(z)$ in the denominator.
    \item The Schwarzian derivative $\{z,x\}$, which involves third-order derivatives of  logarithmic coupling, introduces  transcendental contributions that cannot be cancelled by the potential term.
\end{enumerate}

Therefore, $\mathcal{Q}(z) \notin \mathbb{C}(z)$, constitutes a structural obstruction to algebrization. We conclude that the equation cannot be mapped into a hypergeometric family or any other differential equations with rational coefficients. This confirms that the solutions to the Klein-Gordon equation for this potential cannot be expressed in terms of the standard special functions of mathematical physics.

\subsubsection{Rationalization of the Exponential Potential $e^x$ and Schwarzian Invariant Structure}\label{sec:exponential}

Consider the potential $r(x) = e^x$. We analyzed the possibility of transforming the associated differential equation into one with rational coefficients. To this end, we introduce a transformation $Q(z) = e^x$, where $Q(z) \in \mathbb{C}(z)$ is a rational function of a new variable $z$ (for instance, identity $Q(z)=z$).

The transformation to the normal form $\mathcal{Q}(z)$ is governed by the invariant relation:
\begin{equation}
\mathcal{Q}(z) = \dfrac{r[x(z)]}{(z')^2} - \dfrac{1}{2} \{z, x\},
\end{equation}
where $\{z,x\}$ denotes Schwarzian derivative. Differentiating the relation $Q(z) = e^x$ with respect to $x$, we obtain:

\begin{equation}
Q'(z)\dfrac{dz}{dx} = e^x = Q(z) \implies z'(x) = \dfrac{Q(z)}{Q'(z)}.
\end{equation}

In contrast to the more complex transcendental potentials, this expression involves only rational operations on $Q(z)$ and its derivatives. Since $Q(z) \in \mathbb{C}(z)$, it follows that $z'(x)$ is a rational function of $z$. Consequently, all higher-order derivatives of $z(x)$—and thus the Schwarzian derivative $\{z,x\}$—remain within the differential field $\mathbb{C}(z)$. Substituting these into the transformed coefficient yields:

\begin{equation}
\mathcal{Q}(z) = \dfrac{Q(z) [Q'(z)]^2}{Q(z)^2} - \dfrac{1}{2} \{z, x\} = \dfrac{[Q'(z)]^2}{Q(z)} - \dfrac{1}{2} \{z, x\}.
\end{equation}

Since every term is a rational function, we can conclude that $\mathcal{Q}(z) \in \mathbb{C}(z)$. This demonstrates that for the exponential potential, the transcendental nature of the system can be entirely absorbed into the change in the variable. This transformation preserves rationality, placing the system within the standard framework of classical special functions.

\subsubsection{Rationalization of the Hyperbolic Potential $\tanh x$ and Schwarzian Invariant Structure}\label{sec:hyperbolic}

We now consider the potential $r(x) = \tanh x$. To determine its algebrizability, we introduce the transformation:

\begin{equation}
z = \tanh x.
\end{equation}

As established, the transformed potential $\mathcal{Q}(z)$ depends on the original potential and  Schwarzian correction. Differentiating $z$ with respect to $x$ yields:

\begin{equation}
z'(x) = \operatorname{sech}^2(x) = 1 - \tanh^2(x) = 1 - z^2.
\end{equation}

This derivative is a polynomial in $z$, and therefore a rational function. Because the repeated differentiation of $1-z^2$ produces only algebraic combinations of $z$, it follows that all higher derivatives of $z(x)$ are rational. Consequently, Schwarzian derivative $\{z,x\}$ is an element of $\mathbb{C}(z)$. 

Given that the original potential transforms as $r[x(z)] = z$, the final coefficient becomes:

\begin{equation}
\mathcal{Q}(z) = \dfrac{z}{(1 - z^2)^2} - \dfrac{1}{2} \{z, x\}.
\end{equation}

Both terms belong to $\mathbb{C}(z)$, confirming that $\mathcal{Q}(z) \in \mathbb{C}(z)$. This shows that for the hyperbolic tangent potential, the transcendental dependence is successfully absorbed by the change of variable without generating non-rational structures. The system is thus algebrizable, belonging to the class of differential equations associated with classical special functions.

\subsubsection{Arbitrary  Transformation}

We previously demonstrated the non-existence of a rational transformation $Q(z) = 2Ee^{-\tanh x} - e^{-2\tanh x}$ wich maps the differential equation $y'' + k(x)y = 0$ into $y(z)'' + Q(z)y(z) = 0$. Let us now consider a general transformation and analyze the resulting derivative terms.

Consider the equation $y'' + r(x)y = 0$, where the physical potential is:
$$r(x) = E^2 - m^2 + 2Ee^{-\tanh x} - e^{-2\tanh x}$$
We propose a change in variable such that $q(z) = r(x)$, where $q(z)$ is a rational function in $z$.

The relationship between the physical potential $r(x)$ and the rational potential $q(z)$ is given by:

$$q(z) = r(x).$$

Differentiating with respect to $x$ via the chain rule:

$$q'(z) z' = r'(x) \implies z' = \dfrac{r'(x)}{q'(z)}.$$

Calculation of $z''$. Differentiating $z'$ with respect to $x$ again, using the quotient rule and noting that $q'(z)$ depends on $x$ through $z$ (thus $\dfrac{d}{dx}[q'(z)] = q''(z)z'$):

$$\dfrac{d}{dx} \left[ \dfrac{r'(x)}{q'(z)} \right] = \dfrac{r''(x)q'(z) - r'(x) \dfrac{d}{dx}[q'(z)]}{(q'(z))^2}.
$$
$$z'' = \frac{r''(x) q'(z) - r'(x) [q''(z) z']}{[q'(z)]^2}.$$

Substitution of into the First Derivative Term.
The general transformed equation has a first-order coefficient $\dfrac{z''}{(z')^2}$. By sustituting the previous expressions, we obtain:

$$\dfrac{z''}{(z')^2} = \dfrac{\dfrac{r''(x) q'(z) - r'(x) q''(z) z'}{[q'(z)]^2}}{\left[ \dfrac{r'(x)}{q'(z)} \right]^2}.$$

The denominators $(q'(z))^2$ cancel out:
$$\frac{z''}{(z')^2} = \frac{r''(x) q'(z) - r'(x) q''(z) z'}{[r'(x)]^2}.$$

Substitution of the Potential Term.
The second coefficient of the transformed equation is $\frac{r(x)}{(z')^2}$. Using $r(x) = q(z)$ and substituting $z'$:
$$\frac{r(x)}{(z')^2} = \frac{q(z)}{\left[ \frac{r'(x)}{q'(z)} \right]^2} = \frac{q(z) \left[ q'(z)\right]^2}{[r'(x)]^2}$$

Final Result: The Mapped Equation.
Assembling all terms into the transformed ODE $\dfrac{d^2w}{dz^2} + \left[ \dfrac{z''}{(z')^2} \right] \dfrac{dw}{dz} + \left[ \dfrac{r(x)}{(z')^2} \right] w = 0$:
$$\dfrac{d^2w}{dz^2} + \left\{ \dfrac{r''(x)q'(z) - r'(x)q''(z)z'}{[r'(x)]^2} \right\} \frac{dw}{dz} + \left\{ \frac{q(z) [q'(z)]^2}{[r'(x)]^2} \right\} w = 0$$
which simplifies to:
$$\frac{d^2w}{dz^2} + \left[ \frac{r''(x)q'(z)}{[r'(x)]^2} - \frac{q''(z)}{q'(z)} \right] \frac{dw}{dz} + \left\{ \frac{q(z)[q'(z)]^2}{[r'(x)]^2} \right\} w = 0$$

For the resulting equation to have rational coefficients, the term $\frac{q(z)(q'(z))^2}{[r'(x)]^2}$ must be a rational function $M(z)$. Given that $q(z)$, $q'(z)$, and $M(z)$ are rational, it implies that $[r'(x)]^2$ must also be a rational function of $z$.

Computing the derivative of the physical potential:

$$[r'(x)]^2 = \left[ -(1-\tanh^2x)2Ee^{-\tanh x} + 2(1-\tanh^2x)e^{-2\tanh x} \right]^2.$$

Therefore, the expression:

$$-(1-\tanh^2x)2Ee^{-\tanh x} + 2(1-\tanh^2x)e^{-2\tanh x},$$
must be a rational function. We can factor this as:
$$-(1-\tanh^2x) \left[ 2Ee^{-\tanh x} - 2e^{-2\tanh x} \right].$$

$$-(1-\tanh^2x) \left[ 2Ee^{-\tanh x} - e^{-2\tanh x} - e^{-2\tanh x} \right].$$

Substituting $r(x) = q(z)$:

$$-(1-\tanh^2x) \left[ r(x) - (E^2-m^2) - e^{-2\tanh x} \right].$$

$$-(1-\tanh^2x) \left[ q(z) - (E^2-m^2) - e^{-2\tanh x} \right].$$

This requires that $(1-\tanh^2x) = Y(z)$ and $e^{-2\tanh x} = K(z)$  be rational functions in $z$. However, if $1-\tanh^2x = Y(z)$, then $\tanh x = \sqrt{1-Y(z)}$, which is at most algebraic. It follows that:

$$e^{-2\tanh x} = e^{-2\sqrt{1-Y(z)}} \neq K(z).$$

The exponential of a non-constant algebraic function is transcendental (Hermite-Lindemann Theorem). Thus, it cannot be a rational function $K(z)$ for any $z$, thereby completing the proof by contradiction.

\subsection{Solutions as Compositions of Special Functions}

The study of analytical solutions in mathematical physics has traditionally focused on the application of \textbf{special functions}. While the classical canon—comprising Bessel, Airy, and Hypergeometric functions—are founded upon differential equations with rational coefficients in the field $\mathbb{C}(z)$, there exists a significant class of special functions defined by transcendental coefficients, such as the \textbf{Mathieu, Lamé, and Hill} functions.

\subsubsection{Rationalization and Algebraic Closure}

It is crucial to emphasize that the analytical integrability of these transcendental functions stems from their capacity to be \textbf{rationalized} through specific coordinate transformations. These functions are not deemed ``special'' by arbitrary decree; rather, they are distinguished because their transcendental coefficients (whether trigonometric or elliptic) satisfy closed algebraic relations. These identities allow  mapping of the original Ordinary Differential Equation (ODE) into an algebraic form:

\begin{itemize}
    \item The \textbf{Mathieu equation} is reduced to a rational form via the transformation $z = \cos x$, facilitated by the Pythagorean identity $\sin^2 x + \cos^2 x = 1$.
    \item The \textbf{Lamé equation} is rationalized using $z = \wp(x)$, relying on the polynomial differential relation of the Weierstrass elliptic function: $(\wp')^2 = 4\wp^3 - g_2\wp - g_3$.
\end{itemize}

In such instances, transcendence is \textit{extrinsic} to a system. Differential isomorphism projects the problem into the field of rational functions, where the Differential Galois Group becomes algorithmically computable.

\subsubsection{Composition Analysis for the $e^{-\tanh x}$ Potential}

For the potential under investigation, $f(x) = 2Ee^{-\tanh x} - e^{-2\tanh x}$, we must evaluate the feasibility of representing the solution $\psi(x)$ as a composition of classical special functions. For such a representation to be valid, there must exist a coordinate transformation $z = \xi(x)$ such that the resulting Schrödinger equation possesses rational coefficients in the new variable $z$. However, as established in Section \ref{sec:una}, no such rationalizing transformation exists for this system.

\subsubsection{Inference of Non-Integrability via Special Functions}

Given that $\ln Q(z) \notin \mathbb{C}(z)$ for any rational function $Q(z)$, the transcendence of the $e^{-\tanh x}$ potential is \textbf{intrinsic} and cannot be eliminated through algebraic coordinate transformations. This leads to the following two fundamental conclusions:

\begin{enumerate}
    \item \textbf{Imposibilidad de Racionalización:} There is no differential isomorphism connecting this system to the Hypergeometric or Whittaker families, as their defining equations strictly require rational coefficients to maintain their character within the Picard-Vessiot framework.
    \item \textbf{Non-existence of Special Function Compositions:} The Solution $\psi(x)$ cannot be expressed as a composition of classical special functions. Such a structure would require the Differential Galois Group to be solvable over a rational base field—a condition explicitly invalidated by the persistence of the transcendental $\ln Q(z)$ term.
\end{enumerate}

Consequently, the solutions to this system define a new class of transcendence that exceeds the complexity of  traditional special functions of mathematical physics.

\section{Results}
\label{Equation not integrable in terms of special functions}

\begin{theorem}
The differential equation $\Psi(x)+\left\{[E-V(x)]^2-m^2\right\}\Psi=0$ associated with the relativistic potential $V(x) = V_0e^{-a\tanh bx}$ is non-integrable. Specifically:
\begin{enumerate}
    \item The equation does not admit solutions in any Liouvillian extension of the base field $K = \mathbb{C}(z)(t)$.
    \item The solutions cannot be expressed as a finite composition of  special functions.
\end{enumerate}
\end{theorem}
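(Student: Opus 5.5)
The plan is to assemble the final theorem from the two strands already developed, after reducing it to the normal form used there. The sign change $a\mapsto -a$ between $V_0e^{-a\tanh bx}$ and the $\alpha$-attractor profile of Section~\ref{Non-Liuvillian solution} needs a remark. The whole analysis depends only on $a\neq 0$: it uses the field $K=\mathbb{C}(z)(t)$ with $t=e^{\pm az}$ and the derivation $D=\partial_z\pm at\partial_t$. So everything transfers verbatim, and we may also take $b=1$ after rescaling $x$, to match the form $f(x)=2Ee^{-\tanh x}-e^{-2\tanh x}$ of Section~\ref{sec:una}. The first step is therefore to substitute $z=\tanh(bx)$ and $\Psi=(1-z^2)^{-1/2}y$ again. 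This yields $y''=R_{\text{eff}}(z,t)y$ with $R_{\text{eff}}\in\mathbb{C}(z)[t]$ of degree exactly $2$ in $t$, with leading coefficient $A(z)=-V_0^2/(b^2(1-z^2)^2)\neq 0$.

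For item 1, I would invoke the Theorem identifying $G=SL(2,\mathbb{C})$. That theorem rests on Proposition 1 (no solution of the Riccati equation in $K$), Proposition 2 (no nonzero invariant of the second symmetric power in $K$) and Proposition 3 ($G$ not finite), together with the Jordan--Kovacic classification. Since $G^0=G$ is simple and hence not solvable, the Picard--Vessiot criterion (Theorem 1) rules out Liouvillian solutions over $K$. The remaining step is to pass back to $\Psi$. The gauge factor $(1-z^2)^{-1/2}$ is algebraic over $K$, and $z=\tanh bx$ is a Liouvillian change of variable. So a Liouvillian $\Psi$ would give a Liouvillian $y$, which is a contradiction.

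For item 2, I would argue by contradiction. Suppose $\Psi$ were a finite composition of classical special functions. Each such function satisfies a linear ODE over $\mathbb{C}(z)$, possibly after a rationalizing change of variable of the Mathieu or Lamé type. There would then be a transformation $z=\xi(x)$ and a gauge factor carrying the equation to $w''+M(z)w=0$ with $M\in\mathbb{C}(z)$. The ``Arbitrary Transformation'' computation then forces $[r'(x)]^2$ to be rational in $z$, where $r$ is the physical potential. This in turn forces $1-\tanh^2(bx)$ and $e^{-2a\tanh bx}$ to be rational in $z$. The Hermite--Lindemann argument at the end of Section~\ref{sec:una} rules this out: the exponential of a nonconstant algebraic function is not rational. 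Hence no admissible $\xi$ exists, and $\Psi$ admits no such representation.

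The main obstacle is item 2. The notion of a ``finite composition of special functions'' has to be made precise enough that it really implies the existence of a rationalizing transformation; the Schwarzian argument assumes this rather than proving it. My first attempt would be to fix a class $\mathcal{S}$: functions obtained by field operations, compositions and quadratures from solutions of second-order ODEs over $\mathbb{C}(z)$. I would then show $\mathcal{S}$ is closed under these operations, and that any second-order equation all of whose solutions lie in $\mathcal{S}$ is projectively equivalent to a pullback of one of them. I would then reduce to the rationality condition above. If that fails, I would restrict item 2 explicitly to transformations of the form $\Psi=\phi(x)F(\xi(x))$ with $F$ satisfying a rational ODE. In that restricted form the argument above is complete.
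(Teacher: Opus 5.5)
Your plan is the paper's own proof of this theorem: item 1 comes from $G=SL(2,\mathbb{C})$, and item 2 from the ``Arbitrary Transformation'' computation plus Hermite--Lindemann. It therefore inherits the gaps in those ingredients, and the one place where you claim completeness is wrong.

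\textbf{Item 2.} The ``Arbitrary Transformation'' computation starts from the ansatz $r(x)=q(\xi(x))$ with $q\in\mathbb{C}(z)$. Only under that ansatz does rationality of the new coefficient force $[r'(x)]^2$ to be rational in $z$.

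A representation $\Psi=\phi\,F(\xi)$, with $F$ solving a rational ODE normalized to $F''+M(z)F=0$, does not produce that ansatz. Comparing the constant Wronskians forces $\phi\propto(\xi')^{-1/2}$, and the representation is then equivalent to $r=\xi'^2M(\xi)+\tfrac12\{\xi,x\}$. That relation does not make $r$ a rational function of $\xi$.

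Worse, with $\xi$ unrestricted your ``restricted form'' of item 2 is false. For independent solutions $y_1,y_2$, the ratio $\xi=y_1/y_2$ satisfies $\tfrac12\{\xi,x\}=r$, so every solution is $(\xi')^{-1/2}F(\xi)$ with $F''=0$.

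So you must fix a class for $\xi$, say algebraic or Liouvillian over $K$. A Singer-type theorem on solvability in terms of second-order equations is what would link your class $\mathcal{S}$ to such pullbacks. You would then have to show that the Schwarzian relation has no solution in that class for any rational $M$. Nothing in the imported Hermite--Lindemann step does this, and that step is faulty anyway:
\begin{itemize}
\item rationality of the product $-(1-\tanh^2x)[\cdots]$ does not make each factor rational;
\item Hermite--Lindemann concerns $e^{\alpha}$ for algebraic numbers, not $e^{f}$ for algebraic functions $f$.
\end{itemize}

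\textbf{Item 1.} The three exclusion propositions do not establish what you take from them.

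\emph{Case 1 (Riccati).} At an irreducible $q\neq t$, the double pole of $W^2$ is cancelled by $D(W)$ whenever $W-D(q)/q$ is regular at $q$; these are the zeros of $y$. So the general candidate is $W=D(P)/P+c\,t+d$ with $c^2=r_2$ and $P=\sum_{j=0}^{n}p_jt^j\in\mathbb{C}(z)[t]$, $p_0p_n\neq0$, not only $P=1$. This is repairable:
\begin{itemize}
\item the $t^{n+1}$-coefficient gives $d=\frac{r_1-c'}{2c}-\frac{a(2n+1)}{2}-\frac{p_n'}{p_n}$;
\item the $t^0$-coefficient says $u=d+p_0'/p_0\in\mathbb{C}(z)$ satisfies $u'+u^2=r_0$;
\item this is impossible at $z=\infty$, since $u\to-a(2n+1)/2\neq0$ while $r_0=O(z^{-4})$.
\end{itemize}

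\emph{Case 2 (symmetric square).} The proposition only excludes invariants $b\in K$, and it assumes without proof that $b$ is polynomial in $t$. An imprimitive group only gives $b=y_1y_2$ with $b^2\in K$, i.e.\ a solution of the symmetric square whose logarithmic derivative lies in $K$.

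\emph{Case 3 (finite group).} The proposition treats $t=\infty$ as a singular point, but $t$ is not the independent variable. Over $\mathbb{C}(z)(e^{az})$ the equation is not even meromorphic at $z=\infty$, so Ramis's theorem does not apply. What does work is local monodromy at $z=\pm1$, which lies in $G$ because $e^{az}$ is single-valued. Its exponent difference is $\sqrt{m^2-(E-V_\pm)^2}/b$ with $V_\pm=\lim_{x\to\pm\infty}V(x)$. This excludes finite $G$ whenever that number is irrational (e.g.\ at all scattering energies), but not for every parameter value.

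\emph{Hypotheses.} The theorem as stated needs parameter restrictions. For $V_0=0$ or $a=0$ the coefficients are constant and $\Psi=e^{\pm ikx}$ is elementary.
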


\begin{proof}
The proof follows from the synthesis of the algebraic and structural results demonstrated in the preceding sections:

\textbf{1. Non-Solvability of the Galois Group:} 
As established in the exhaustive application of the Kovacic algorithm, the Differential Galois Group of the normal form is $G = SL(2, \mathbb{C})$. According to the \textit{Picard-Vessiot Theorem}, a linear differential equation is solvable by quadratures (Liouvillian) if and only if the connected component of identity $G^0$ is a solvable algebraic group. Since $SL(2, \mathbb{C})$ is a simple, non-solvable Lie group and $G = G^0$, the existence of Liouvillian solutions is formally excluded.

\textbf{2. Structural Obstruction to Algebrization:} 
As shown in our analysis of arbitrary coordinate transformations $z = \xi(x)$, any mapping attempting to rationalize the potential $r(x)$ necessarily introduces  transcendental dependence of the form $\ln Q(z)$. According to the \textit{Hermite-Lindemann theorem}, the exponential coupling $e^{-\tanh x}$ generates an intrinsic transcendence that cannot be eliminated via algebraic identities (unlike the Mathieu or Lamé cases). This ensures that the transformed potential $\mathcal{Q}(z)$ remains outside the rational field $\mathbb{C}(z)$ due to the non-vanishing transcendental contributions in the Schwarzian derivative $\{z, x\}$.

\textbf{3. Incompatibility with the special function canon:} 
Since all classical special functions are solutions to differential equations with rational coefficients (or reducible to such through algebrization), the persistent non-rationality of $\mathcal{Q}(z)$ implies that the solutions to the present system define a higher class of transcendence. Consequently, no differential isomorphism exists between the current system and  solvable hierarchies of mathematical physics, precluding any representation via composition of known special functions.

This structural and group-theoretical dual obstruction completes the proof.
\end{proof}

\section{Conclusions}
\label{sec_Conclusions}

In this study, we conducted a rigorous analysis of the integrability properties of the differential equation associated with the potential $V(x) = e^{-\tanh x}$ within the framework of the differential Galois theory. From multiple complementary perspectives,the results obtained establish, that the system does not admit Liouvillian solutions, and cannot be expressed in terms of known functions.

A central aspect of this analysis is the structural obstruction to algebrization. By studying admissible changes in variables, we have shown that any transformation attempting to map the equation into a differential equation with rational coefficients necessarily introduces transcendental contributions through logarithmic terms. These terms propagate under differentiation and crucially contaminate the Schwarzian derivative, which arises naturally when restoring the reduced form of the equation. Consequently, the transformed coefficient cannot belong to the rational function field $\mathbb{C}(z)$, thereby proving that the equation is not algebrizable.

This result contrasts sharply with classical integrable models, such as those associated with exponential or hyperbolic tangent potentials, where suitable transformations preserve rationality and lead to equations within hypergeometric or confluent families. In the present case, the intrinsic transcendental structure of the potential prevents any such reduction by placing the system outside the standard hierarchy of the solvable models.

From the perspective of differential Galois theory, this non-algebrizability is reflected in the structure of the associated Galois groups. The analysis shows that the differential Galois group is isomorphic to $SL(2,\mathbb{C})$, which is non-solvable. The fundamental correspondence of the Picard–Vessiot theory, implies that the solutions cannot be expressed in Liouvillian form, that is, they cannot be constructed from algebraic functions, exponentials, logarithms, and quadratures.

Furthermore, the results provide strong evidence that the solutions cannot be represented as compositions of classical special functions. Indeed, all known families of special functions arise from differential equations that are either directly defined over rational function fields or can be reduced to such a form through appropriate transformations. The impossibility of achieving such a reduction in the present case indicates that the system does not belong to an established class of special functions.

In summary, the equation associated with the potential $e^{-\tanh x}$ exhibits a fundamental obstruction at the level of its differential structure, which manifests both in the failure of algebrization and in the non-solvability of its Galois group. These results place the system beyond the reach of classical analytical methods and highlight the relevance of differential algebraic techniques in identifying genuinely non-integrable models in relativistic quantum mechanics.

Future work may explore whether the solutions define new transcendental functions or whether alternative frameworks, such as asymptotic analysis or numerical spectral methods, can provide further insight into the behavior of the system.

\section*{Declarations}

\subsection*{Funding Statement}
This work was supported by Yachay Tech University as part of the authors' regular academic and research activities. No external grants or specific funding were received for the development or publication of this study.

\subsection*{Ethical Compliance}
Ethical approval was not required for this study, as the research is purely theoretical and mathematical, and does not involve any studies with human participants or animals performed by any of the authors.

\subsection*{Data Access Statement}
Data sharing is not applicable to this article as no datasets were generated or analyzed during the current study. All theoretical derivations and mathematical results are fully disclosed within the manuscript.

\subsection*{Conflict of Interest}
The authors declare that they have no affiliations with or involvement in any organization or entity with any financial or non-financial interest in the subject matter or materials discussed in this manuscript.

\bibliographystyle{unsrt}
\bibliography{alpha.bib}

\end{document}